\theoremstyle{plain}
 \theoremstyle{remark} 
\newtheorem {theo} {\bf Theorem} [section]
\newtheorem {prop} [theo] {\bf Proposition}
\newtheorem {lem} [theo] {\bf Lemma}
\newtheorem {defi} {\bf Definition}[section]
\newtheorem{rem}{\bf Remark}[section]
\numberwithin{equation}{section}
\begin{document}
\title[Single-angle Radon samples based  reconstruction]{Single-angle Radon samples based  reconstruction  of   functions in refinable shift-invariant space}
\author{Youfa Li}
\address{College of Mathematics and Information Science\\
Guangxi University,  Nanning, China }
\email{youfalee@hotmail.com}
\author{Shengli Fan}
\address{CREOL College of Optics \& Photonics\\
University of Central Florida,  Orlando, FL 32816}
\email{shengli.fan@knights.ucf.edu}
\author{Yanfeng Huang}
\address{College of Mathematics and Information Science\\
Guangxi University,  Nanning, China }
\email{hyfbqy@163.com}
\thanks{Youfa Li is partially supported by Natural Science Foundation of China (Nos: 61961003, 61561006, 11501132),  Natural Science Foundation of Guangxi (Nos: 2018JJA110110, 2016GXNSFAA380049) and  the talent project of  Education Department of Guangxi Government  for Young-Middle-Aged backbone teachers.
}
\keywords{computerized tomography, single-angle Radon sample, refinable shift-invariant spaces, box-spline, positive definite function}
\subjclass[2010]{Primary 42C40; 65T60; 94A20}

\date{\today}

\begin{abstract}
The traditional approaches to computerized tomography (CT) depend on the samples of  Radon transform at multiple angles.
In optics, the real time imaging requires the reconstruction   of an object by the samples of Radon transform at a    \emph{single angle} (SA).
Driven by this and motivated by the connection between Bin Han's construction of wavelet frames (e.g \cite{Han1}) and Radon transform,  in refinable
shift-invariant spaces (SISs)  we investigate the SA-Radon sample based reconstruction problem. We have two main theorems. The  fist main theorem states   that,  any compactly supported  function in a SIS generated by   a  general refinable function can be determined by
its Radon  samples at an appropriate  angle. Motivated by the  extensive application of positive definite (PD) functions to
interpolation of scattered data, we also investigate the SA reconstruction problem in a class of (refinable)  box-spline generated SISs.
Thanks to the PD property of the Radon transform of such spline, our second main theorem states  that, the   reconstruction of  compactly supported functions in these spline generated  SISs  can be achieved  by the samples of  Radon transform  at almost every   angle.
Numerical simulation is conducted to check the result.
\end{abstract}
\maketitle

\section{Introduction}
\subsection{CT and Radon transform}
We start with the X-ray computerized tomography (CT) on $\mathbb{R}^{2}$. 
Its core mathematics   includes  the Radon transform and its inversion.
For a function $f\in L^{2}(\mathbb{R}^{2})$,
the  Radon transform, w.r.t a fixed    angle $\theta\in [0, 2\pi)$,   at $t\in \mathbb{R}$
is defined as the integral of $f$ along the line $(x,y)=
t(\cos\theta, \sin\theta)+s(-\sin\theta, \cos\theta):$
\begin{align}\label{gbhf} \int^{\infty}_{-\infty}
f(t\cos\theta-s\sin\theta, t\sin\theta+s\cos\theta)ds. \end{align}
Denote  $P=[\cos\theta, \sin\theta]$. For simplicity, following B. Han \cite{Han1} the  Radon transform in \eqref{gbhf} is denoted  as
$$Pf(t)=\int^{\infty}_{-\infty}
f(t\cos\theta-s\sin\theta, t\sin\theta+s\cos\theta)ds.$$
The Fourier transform of $Pf$  can be expressed as
 \begin{align}\label{Radonform} \widehat{Pf}(\xi)=\widehat{f}(P^{T}\xi)=\widehat{f}(\xi\cos\theta, \xi\sin\theta), \quad \xi\in \mathbb{R},\end{align}
where  for any function $g\in L^{p}(\mathbb{R}^{D})$ its Fourier transform at $\gamma\in \mathbb{R}^{D}$ is defined as $\widehat{g}(\gamma):=\int_{\mathbb{R}^{D}}g(x)e^{-\textbf{i}x\cdot \gamma}dx$.
It follows from \eqref{Radonform} that    $\widehat{Pf}$ is essentially
the  projection of $\widehat{f}$ onto the subspace (slice) $\{P^{T}\xi: \xi\in \mathbb{R}\}$.
Correspondingly, from now on $P$ is called
the \emph{projection  vector}.
On the other hand, $f$ can be reconstructed via the so called inverse Radon transform (c.f.\cite{Natterer}):
\begin{align}\label{pppp}
f(x,y)=\frac{1}{4\pi^{2}}\int^{2\pi}_{0}\int^{\infty}_{0}\widehat{f}(P^{T}\xi)e^{\textbf{i}\xi(x\cos\theta+y\sin\theta)}\xi d\xi d\theta,
\end{align}
with $P=[\cos\theta, \sin\theta].$

\subsection{Traditional  approaches conducted by Radon transforms at multiple angles}
Theoretically, \eqref{pppp} implies that  the reconstruction of $f$  requires   the  projections  $\widehat{f}(P^{T}\xi)$
  for all angle   $\theta$.
  In practice, however, what we can observe are the   samples of limited  projections.
Therefore, the essential problem of CT (also referred to as the  limited angle problem (LAP)) is to construct the approximation to $f$
by the  samples of  finitely many   projections. The most classical approximation  is  the filtered backprojection (FBP) (c.f. \cite{Kak,Shengli})
derived from \eqref{pppp}. In principle, FBP  requires $f$ to be bandlimited and needs the Fourier samples.
If  FBP is applied  to non-bandlimited functions, then the low-pass filtering (to cut  off the high-frequency component)  and linear interpolation are commonly  necessary. They    may  introduce
serious errors (c.f. \cite{YXu,Natterer1}).

Some recent  alternatives     to FBP have been introduced (e.g. \cite{Unser1,Kerkyacharian,Bai,Unser2,YXu}). Unlike FBP, they are conducted by the samples of
Radon transforms.  For example, based on the orthogonal polynomial system,  Xu
\cite{YXu} established the approach to CT.
Entezari, Nilchian and  Unser
\cite{Unser1} and  McCann   and  Unser \cite{Unser2} established a    spline-based  reconstruction.
Note  that the  samples required  for   the above approaches    are derived from  Radon transforms at  multiple angles.

Contrary to the traditional  approaches requiring multiple angles,
our purpose is to reconstruct  a function   by  the samples of its  Radon transform at a single angle (SA).
In what follows, we  introduce the requirement    for such  a   SA-based  approach in optical imaging.

\subsection{SA-based reconstruction is required  for real time imaging}
\label{fastimaging}
Optical imaging has been widely used in observing biological objects, such as blood cells (thin objects) and bones (thick objects). The thin objects are commonly imaged directly by refractive-index distributions, which  is  achieved by holographic tomography (HT) (\cite{Kim}). However, for imaging thick  objects,   CT  is  usually employed.

CT   commonly  requires  samples  (measurements) of the light fields penetrating through the object    from different angles (views).
To do so, the object  needs to be rotated by a rotation motor (\cite{Yablon}) or the illumination needs to be scanned by a beam steering device, which not only causes instability for the imaging system, but makes the system bulky (\cite{Antipa,Horisaki}). More importantly, limited by the time of recording fields, rotating objects or scanning illuminations becomes not suitable for real-time imaging, especially for observing fast dynamic events (\cite{Horisaki}).
Now a natural  imaging  problem is:  \begin{align} \label{tywm}  \hbox{ can     CT
be achieved by  the samples of SA-Radon transform?} \end{align}
Most recently,  R. Horisaki, K. Fujii, and J. Tanida  \cite{Horisaki} established a
SA method for HT  by  inserting a diffuser.
However, to our best of knowledge, \eqref{tywm} (the SA problem for CT)  is very challenging and remains  less explored.
The purpose of this paper is to investigate   it  from the mathematical perspective.
In what follows, we  organize the problem in the present paper.

\subsection{The mathematical problem of  SA-based reconstruction  in refinable SISs}
Recall that the  refinable  shift-invariant space (SIS) is  a   type of  function space that is  widely  used in signal processing
and time-frequency analysis
(e.g. \cite{Fienup289,Feris,Matla,SUNQIY1,yunzhang1}). On the other hand,  motivated
by the  connection between Bin Han's   construction of wavelet frames  (c.f.  \cite{Han1,Hanprojection,Hanprojectionold,Hanappletter} and to be introduced in subsection \ref{connection}) and  Radon transform \eqref{Radonform},
we will investigate the following  problem:
\begin{align} \label{SPP} \begin{array}{llll} \hbox{\textbf{Q}}: \hbox{Can one reconstruct a function in a refinable SIS by  its SA-Randon (transform)}\\
\quad \quad \hbox{samples}?\end{array}\end{align}
As the  problem  \eqref{tywm} in optics, the above  problem is still less explored in mathematics.
 It might be "stereotyped" by the  inverse Radon transform  \eqref{pppp} which requires the Fourier samples of  Radon transform at multiple angles.


\subsection{Our contributions}
Our   main results are organized in Theorems \ref{ee2} and   \ref{optimal}.
%
 In Theorem \ref{ee2} we  prove that,   any compactly supported
function   from  a general  refinable
SIS can be exactly reconstructed by its  Radon  samples at the appropriate angle.
 For a class of  refinable   box-splines, we   prove that their Radon transforms  at any angle are
 positive definite (PD). And  Theorem \ref{optimal} states
 that for  almost every     angle $\theta\in [0, 2\pi)$, any compactly supported function in such  box-splines  generated  SISs can be reconstructed by
 its  Radon (transform at   angle $\theta$)  samples.


\section{Preliminary}
\subsection{Connection between Han's    construction of wavelet frame and  Radon transform \eqref{Radonform}}\label{connection}
B. Han et.al \cite{Han1,Hanprojection,Hanprojectionold,Hanappletter} and \cite[Section 7.1.3]{BHBOOK}
provided the  projection  method for constructing  wavelet frames. Specifically,
the wavelet frames  $\{\psi^{1}, \ldots, \psi^{L}\}\subseteq L^{2}(\mathbb{R}^{2})$ are first constructed,
then the ones   $\{\tilde{P}\psi^{1}, \ldots, \tilde{P}\psi^{L}\}\subseteq L^{2}(\mathbb{R})$ defined by
\begin{align}\label{hansform}\widehat{\tilde{P}\psi^{l}}(\xi)=\widehat{\psi^{l}}(\tilde{P}^{T}\xi)\end{align}  can have desirable properties if choosing an  appropriate
 vector  $\tilde{P}\in \mathbb{Z}^{1\times 2}$. Note that both  \eqref{Radonform} and \eqref{hansform} take the identical form.
Moreover, since  $\widehat{\tilde{P}\psi^{l}}=\widehat{\psi^{l}}(\|\tilde{P}\|_{2}\frac{\tilde{P}^{T}}{\|\tilde{P}\|_{2}}\xi)$ then $\tilde{P}\psi^{l}$
is essentially the composition of  Radon transform (projection vector: $\frac{\tilde{P}^{T}}{\|\tilde{P}\|_{2}}$)  and the  dilation operation (dilation factor: $\|\tilde{P}\|^{-1}_{2}$).

\subsection{On the  support of  $\tilde{P}f$ defined via \eqref{hansform}}
For a function $f\in  L^{1}(\mathbb{R}^{2})$ and a   vector  $\tilde{P}\in \mathbb{R}^{1\times 2}$,
 motivated by  \cite{Han1,BHBOOK} we next address the relationship  between     $\tilde{P}f$ and $f$ in the  spatial  domain. Denote the singular value decomposition (SVD) of $\tilde{P}$
by $\tilde{P}=\Sigma V^{T}$, where  $V$ is a  $2\times 2$ real-valued unitary
matrix, and $\Sigma=[
\sigma, 0]$ with $\sigma=\|P\|_{2}$.
Now it follows from     \cite{Han1,BHBOOK} that
$
\tilde{P}f=\Sigma(V^{T}f),
$
where $V^{T}f(x)=f((V^{T})^{-1}x)$ with  $x=(x_{1},   x_{2})^{T}\in \mathbb{R}^{2}$,  and for any $g$ on $\mathbb{R}^{2}$ the function  $\Sigma g$ is defined by
\begin{align}\label{cvb}
\Sigma g (x_{1})=\sigma^{-1}\int_{\mathbb{R}} g(\sigma^{-1}x_{1},
x_{2})dx_{2}.
\end{align}
It is easy to check that if  $f\in L^1(\mathbb{R}^2)$ then $\tilde{P}f\in L^1(\mathbb{R})$.
Moreover,  the following lemma states that     if $f\in
L^2(\mathbb{R}^2)$ is compactly supported,   then $\tilde{P}f$ still lies in $ L^2(\mathbb{R})$ and is also compactly supported.

\begin{lem}\label{pouy}
Suppose that $f\in L^{2}(\mathbb{R}^{2})$ is compactly supported on $[a_{1}, b_{1}]\times
[a_{2}, b_{2}]$. For $0\neq\tilde{P}\in \mathbb{R}^{1\times 2}$, define
 $\tilde{P}f$ via \eqref{hansform}.
Then
\begin{align}\label{qujian} \hbox{supp}(\tilde{P}f)\subseteq [-\sqrt{2}\|\tilde{P}\|_{2}\max\{|b_{i}|, |a_{i}|: i=1, 2\}, \sqrt{2}\|\tilde{P}\|_{2}\max\{|b_{i}|, |a_{i}|: i=1, 2\}],\end{align}
and $\tilde{P}f\in L^{2}(\mathbb{R})$.
\end{lem}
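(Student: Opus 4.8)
The plan is to work entirely in the spatial domain, using the decomposition $\tilde{P}f = \Sigma(V^{T}f)$ recorded above together with the explicit formula \eqref{cvb}. The strategy is to track the support through the two operations $V^{T}$ and $\Sigma$ separately, and then to obtain the $L^{2}$ claim by a Cauchy--Schwarz estimate that exploits the compactness of the support. Throughout I write $\sigma=\|\tilde{P}\|_{2}$ and $M:=\sqrt{2}\max\{|b_{i}|,|a_{i}|:i=1,2\}$.

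First I would control the support of $g:=V^{T}f$. Every point $(x_{1},x_{2})\in[a_{1},b_{1}]\times[a_{2},b_{2}]$ satisfies $\|(x_{1},x_{2})\|_{2}\le M$, so $\hbox{supp}(f)$ is contained in the closed Euclidean disk $D_{M}$ of radius $M$ centred at the origin. Since $V$ is orthogonal we have $(V^{T})^{-1}=V$, hence $g(x)=f(Vx)$ and $\hbox{supp}(g)=V^{T}\hbox{supp}(f)\subseteq V^{T}D_{M}=D_{M}$, because orthogonal maps preserve Euclidean norms and therefore leave $D_{M}$ invariant. Next I would push the support through $\Sigma$. By \eqref{cvb}, $\Sigma g(x_{1})=\sigma^{-1}\int_{\mathbb{R}}g(\sigma^{-1}x_{1},x_{2})\,dx_{2}$, so $\Sigma g(x_{1})$ can be nonzero only if the vertical line $\{(\sigma^{-1}x_{1},x_{2}):x_{2}\in\mathbb{R}\}$ meets $\hbox{supp}(g)\subseteq D_{M}$, which forces $|\sigma^{-1}x_{1}|\le M$, i.e. $|x_{1}|\le\sigma M=\sqrt{2}\|\tilde{P}\|_{2}\max\{|b_{i}|,|a_{i}|\}$. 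This is exactly the containment \eqref{qujian}.

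For the $L^{2}$ claim, for each fixed $x_{1}$ the integrand $x_{2}\mapsto g(\sigma^{-1}x_{1},x_{2})$ is supported in an interval of length at most $2M$, so Cauchy--Schwarz gives $|\Sigma g(x_{1})|^{2}\le\sigma^{-2}\cdot 2M\int_{\mathbb{R}}|g(\sigma^{-1}x_{1},x_{2})|^{2}\,dx_{2}$. Integrating in $x_{1}$, applying Fubini, and substituting $u=\sigma^{-1}x_{1}$ (so $dx_{1}=\sigma\,du$) yields $\|\tilde{P}f\|_{L^{2}(\mathbb{R})}^{2}\le 2M\sigma^{-1}\|g\|_{L^{2}(\mathbb{R}^{2})}^{2}=2M\sigma^{-1}\|f\|_{L^{2}(\mathbb{R}^{2})}^{2}<\infty$, where the final equality uses that the orthogonal change of variables $x\mapsto Vx$ is an $L^{2}$-isometry.

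I expect the only delicate point to be the support bookkeeping rather than the $L^{2}$ estimate. The subtlety is the symmetrisation step: the rectangle $[a_{1},b_{1}]\times[a_{2},b_{2}]$ need not be symmetric, and one must enlarge it to the origin-centred disk $D_{M}$ precisely so that the orthogonal factor $V^{T}$ (whose angle is not controlled explicitly) still leaves it invariant; the one-dimensional bound \eqref{qujian} depends on this invariance, which is why the $\sqrt{2}$ appears. Once this is in place, the $L^{2}$ membership follows from the routine Cauchy--Schwarz and Fubini computation above, and the compactness of $\hbox{supp}(\tilde{P}f)$ is what guarantees the Cauchy--Schwarz step produces a finite constant.
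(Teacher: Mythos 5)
Your proof is correct and follows essentially the same route as the paper's: both use the decomposition $\tilde{P}f=\Sigma(V^{T}f)$, track the support through $V^{T}$ via unitarity (the paper bounds $\|x\|_{2}\le\sqrt{2}\max\{|a_{i}|,|b_{i}|\}$ on $\hbox{supp}(f)$ and derives the slice bound \eqref{uytt}, which is exactly your disk-invariance argument), and then apply Cauchy--Schwarz on the compactly supported $x_{2}$-slice plus the substitution $u=\sigma^{-1}x_{1}$ for the $L^{2}$ bound. The only discrepancy is the harmless constant: your factor $2M\sigma^{-1}$ from Cauchy--Schwarz is actually the correct one, whereas the paper's displayed constant omits the factor $2$ coming from the interval length $2M$.
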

\begin{proof}
Denote $V^{T}f$ by $g.$ Then for any $x_{1}\in \mathbb{R},$
\begin{align}\label{en123456}
\begin{array}{llll}
\tilde{P}f(x_{1})&=\displaystyle\Sigma g(x_{1})\\
&\displaystyle=\sigma^{-1}\int_{\mathbb{R}} g(\sigma^{-1}x_{1},
x_{2})dx_{2}\\
&\displaystyle=\sigma^{-1}\int_{\mathbb{R}} [V^{T}f](\sigma^{-1}x_{1},
x_{2})dx_{2}\\
&\displaystyle=\sigma^{-1}\int_{\mathbb{R}} f(V(\sigma^{-1}x_{1},
x_{2})^{T})dx_{2},
\end{array}
\end{align}
where $\sigma=\|\tilde{P}\|_{2}$, the second and last  equalities are  derived from \eqref{cvb} and $V$ being a real-valued  unitary
matrix.
It follows from   $\hbox{supp}(f)\subseteq [a_{1}, b_{1}]\times [a_{2}, b_{2}]$ that  for any $x\in \hbox{supp}(f)$, we have
$\|x\|_{2}\leq\sqrt{2}\max\{|b_{i}|, |a_{i}|: i=1, 2\}$. Now  by  \eqref{en123456} and
$V$ being a unitary matrix it is easy to check that  \eqref{qujian} holds.
Define $G(x_{1}, x_{2}):=f(V(x_{1}, x_{2})^{T})$. For any $(x_{1}, x_{2})^{T}\in \hbox{supp}(G)$,
through the similar analysis   as above we have
\begin{align}\label{uytt} |x_{2}|\leq
\sqrt{2}\max\{|b_{i}|, |a_{i}|: i=1, 2\}.\end{align}
Moreover,
\begin{align}\label{en123}
\begin{array}{llll}
\|\tilde{P}f\|_{L^{2}}^{2}&=\|\Sigma(V^{T}f)\|_{L^{2}}^{2}\\
&\displaystyle=\frac{1}{\sigma^{2}}\int_{\mathbb{R}}\big|\int_{\mathbb{R}} [V^{T}f](\sigma^{-1}x_{1},
x_{2})dx_{2}\big|^{2} dx_{1}\\
&\displaystyle\leq\frac{\sqrt{2}\max\{|b_{i}|, |a_{i}|: i=1, 2\}}{\sigma^{2}}\int_{\mathbb{R}}\int_{\mathbb{R}} |[V^{T}f](\sigma^{-1}x_{1},
x_{2})|^{2}dx_{2}dx_{1}\\
&\displaystyle\leq\frac{\sqrt{2}\max\{|b_{i}|, |a_{i}|: i=1, 2\}}{\sigma}\|f\|_{L^{2}}^{2}<\infty,
\end{array}
\end{align}
where the first inequality is derived from \eqref{uytt} and  the Cauchy-Swart inequality.
The proof is concluded.
\end{proof}

\subsection{(Quasi) shift-invariant space}
\label{Rieszbound}
The space of  square summable sequences on $\mathbb{Z}^{D}$
is defined  as $\ell^{2}(\mathbb{Z}^{D}):=\big\{ \{c_{k}\}_{k\in \mathbb{Z}^{D}}: \|\{c_{k}\}_{k\in \mathbb{Z}^{D}}\|_{\ell^{2}}=[\sum_{k\in \mathbb{Z}^{D}}|c_{k}|^{2}]^{1/2}<\infty\big\}$.
For a generator    $\phi\in  L^{2}(\mathbb{R}^{D})$  with $D\geq1$, the associated  shift-invariant space (SIS) $V(\phi, \mathbb{Z}^{D})$ is defined as
\begin{align}\label{SISDE} V(\phi, \mathbb{Z}^{D})=\Big\{\sum_{k\in \mathbb{Z}^{D}}c_{k}\phi(\cdot-k): \{c_{k}\}_{k\in \mathbb{Z}^{D}}\in \ell^{2}(\mathbb{Z}^{D})\Big\}.\end{align}
For the stable recovery of functions in $V(\phi, \mathbb{Z}^{D})$, it is commonly required that
$\big\{\phi(\cdot-k): k\in \mathbb{Z}^{D}\big\}$ is a Riesz basis for $V(\phi, \mathbb{Z}^{D})$ (c.f. \cite{Fienup289,Aldroubi3,Jun}), namely, there exist $0<C_{1}\leq C_{2}<\infty$
such that
\begin{align}\label{chuan} C_{1}\|\{c_{k}\}_{k\in \mathbb{Z}^{D}}\|^{2}_{\ell^{2}}\leq\big\|\sum_{k\in \mathbb{Z}^{D}}c_{k}\phi(\cdot-k)\big\|^{2}_{L^{2}}\leq C_{2}\|\{c_{k}\}_{k\in \mathbb{Z}^{D}}\|^{2}_{\ell^{2}}.\end{align}
Here $C_{1}$ and $C_{2}$ are referred to as the Riesz  lower and upper bounds  of  $\{\phi(\cdot-k): k\in \mathbb{Z}^{D}\}$.
For a generator $\varphi\in  L^{2}(\mathbb{R}^{D})$ and the shift set  $\mathcal{X}=\{x_{j}\}_{j\in \mathbb{Z}^{D}}\subseteq \mathbb{R}^{D}$, the associated quasi shift-invariant space  (QSIS) is defined as
\begin{align}\label{QSIS}V(\varphi, \mathcal{X}):=\Big\{\sum_{k\in \mathbb{Z}^{D}}c_{k}\varphi(\cdot-x_{j}): \{c_{k}\}_{k\in \mathbb{Z}^{D}}\in \ell^{2}(\mathbb{Z}^{D})\Big\}.\end{align}
If $\mathcal{X}=\mathbb{Z}$ then $V(\varphi, \mathcal{X})$ degenerates to be  an SIS.
As implied  in \cite{QSIS4}, the recovery  theory for  the QSIS ($\mathcal{X}\neq\mathbb{Z}$ )
is not the trivial generalization of that for  the SIS.  For example, the generator for the QSIS
needs to be a  positive definite function  (the definition  is  postponed to section \ref{boxspline}) such that the recovery can be achieved (c.f. \cite[section 3.1(A1)]{QSIS4}).
Readers can refer  to  \cite{QSIS1,QSIS3,QSIS2,QSIS4} for the recent   recovery  results  in
QSISs (generated from  some interesting generators such as the \hbox{sinc}   and Gaussian functions).

\subsection{On the Sobolev smoothness of a function}
For any $\varsigma\in\mathbb{R}$,  the Sobolev space $H^{\varsigma}(\mathbb{R}^{D})$
is defined as
\begin{align}\begin{array}{lllll}\label{defi_s}\displaystyle  H^{\varsigma}(\mathbb{R}^{D}):=\Big\{f: \int_{\mathbb{R}^{D}}|\widehat{f}(\xi)|^{2} (1+\|\xi\|_{2}^{2})^{\varsigma}d\xi<\infty\Big\}.\end{array}\end{align}
 The deduced norm is defined   by
\begin{align}\notag \begin{array}{lllll} \displaystyle \|f\|_{H^{\varsigma}(\mathbb{R}^{D})}:=\frac{1}{(2\pi)^{D/2}}\Big(\int_{\mathbb{R}^{D}}|\widehat{f}(\xi)|^{2}(1+\|\xi\|_{2}^{2})^{\varsigma}d\xi\Big)^{1/2}, \quad \forall f\in H^{\varsigma}(\mathbb{R}^{D}).\end{array}\end{align}
And the Sobolev smoothness of $f$ is defined as $\nu_{2}(f):=\sup\{\varsigma: f\in H^{\varsigma}(\mathbb{R}^{D})\}$.
For $\varsigma>D/2$, by  \cite[Chapter 9.1]{Matla} or \cite[section 1]{LHYH}
the functions in  $H^{\varsigma}(\mathbb{R}^{D})$   are continuous.

\section{SA-Radon samples based  reconstruction    for  compactly supported functions in  general refinable SISs}\label{SPPFORSIS}
Some definitions and denotations are necessary for our discussion. For  a SIS $V(\phi, \mathbb{Z}^{D})$ and a compact set $E\subseteq\mathbb{R}^D $,
define $$V_{E}(\phi, \mathbb{Z}^{D}):=\{f\in V(\phi, \mathbb{Z}^{D}):   \hbox{supp}(f)\subseteq E\}.$$
 Denote by $\lfloor x\rfloor$    ($\lceil x\rceil$)
the largest (smallest) number that is not larger  (not smaller) than $x$. The cardinality of a set $G$  is
denoted by $\#G$. A   function $\phi\in L^{2}(\mathbb{R}^{D})$
is refinable if $\widehat{\phi}(2\xi)=\widehat{a}(\xi)\widehat{\phi}(\xi)$ where
$\widehat{a}(\xi)$ is a  $2\pi\mathbb{Z}^{D}$-periodic trigonometric polynomial (c.f. \cite{Chui,BHBOOK,Youfa}).
In what follows we establish the first main theorem in this paper.

\subsection{The first main theorem}
The following is the first main result in this paper.
\begin{theo}\label{ee2}
Suppose that  $\phi\in L^{2}(\mathbb{R}^{2})$ is refinable and continuous   such that $\hbox{supp}(\phi)\subseteq[N_{1}, M_{1}]\times
[N_{2}, M_{2}]$.
Moreover,  suppose that  the projection vector $P\in \mathbb{R}^{1\times 2}$  and there exists    $\gamma\in \mathbb{R}$  such that $\gamma P\in \mathbb{Z}^{1\times 2}$ and  $\{\gamma Px: x\in \mathbb{Z}^{2}\}=\mathbb{Z}$.
Then for  almost all  discrete set $X\subseteq(0, 1)$ such that $2\leq\#X< \infty$,   any $f\in V_{E}(\phi, \mathbb{Z}^{2})$ with  $E=[a_{1}, b_{1}]\times [a_{2}, b_{2}]$ can be determined
 by the     samples  of its SA-Radon transform   $Pf$ at
 \begin{align}\label{9876tg22}\begin{array}{lllllllll}
 \big(X/\gamma+\mathbb{Z}/\gamma\big)\\
 \bigcap \Big[-\sqrt{2}\max\{|N_{i}|, |M_{i}|: i=1, 2\}+K_{\min}, \sqrt{2}\max\{|N_{i}|, |M_{i}|: i=1, 2\}+K_{\max}\Big],
 \end{array}
 \end{align}
   provided that
the map $P_{\mathring{E}}: \mathring{E}\longrightarrow \mathbb{Z}$ defined by $P_{\mathring{E}}(x)=Px$ is an  injection,
  where \begin{align}\label{eexyz} \mathring{E}=\Big\{\Big[\lceil a_{1}-M_{1}\rceil, \lfloor b_{1}-N_{1}\rfloor\Big]\times
\Big[\lceil a_{2}-M_{2}\rceil, \lfloor b_{2}-N_{2}\rfloor\Big]\Big\}\cap \mathbb{Z}^{2},\end{align}
$K_{\min}=\min\{\gamma Pk: k\in \mathring{E}\}$ and
 $K_{\max}=\max\{\gamma Pk: k\in \mathring{E}\}$.
\end{theo}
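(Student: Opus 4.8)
The plan is to turn the two-dimensional reconstruction into a one-dimensional sampling problem via the Fourier slice identity $\widehat{Pf}(\xi)=\widehat{f}(P^{T}\xi)$, and then to recover finitely many coefficients by a generic-position argument. For \emph{Step 1 (reduction to a one-dimensional SIS)} I would write $f=\sum_{k\in\mathbb{Z}^{2}}c_{k}\phi(\cdot-k)$ and compute, from $\widehat{f}(\eta)=\widehat{\phi}(\eta)\sum_{k}c_{k}e^{-\mathbf{i}k\cdot\eta}$ together with $k\cdot(P^{T}\xi)=(Pk)\xi$, that
\begin{align*}
\widehat{Pf}(\xi)=\widehat{\phi}(P^{T}\xi)\sum_{k}c_{k}e^{-\mathbf{i}(Pk)\xi}=\widehat{P\phi}(\xi)\sum_{k}c_{k}e^{-\mathbf{i}(Pk)\xi}.
\end{align*}
Taking the inverse transform gives $Pf(t)=\sum_{k}c_{k}(P\phi)(t-Pk)$, so $Pf$ is a superposition of shifts of the one-dimensional generator $P\phi$. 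Since $\gamma P\in\mathbb{Z}^{1\times2}$ and $\{\gamma Pk:k\in\mathbb{Z}^{2}\}=\mathbb{Z}$, the shift amounts $Pk$ lie on the lattice $\tfrac1\gamma\mathbb{Z}$, and regrouping yields
\begin{align*}
Pf(t)=\sum_{m\in\mathbb{Z}}d_{m}(P\phi)\bigl(t-\tfrac{m}{\gamma}\bigr),\qquad d_{m}=\sum_{k:\,\gamma Pk=m}c_{k},
\end{align*}
i.e.\ $Pf$ lies in the one-dimensional SIS generated by $P\phi$ on $\tfrac1\gamma\mathbb{Z}$; by Lemma \ref{pouy} the generator $P\phi$ is continuous, lies in $L^{2}(\mathbb{R})$, and is compactly supported.

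\emph{Step 2 (isolating the coefficients).} Using the compact support of $\phi$ and $f$, only translates whose support meets $E$ can contribute, so the nonzero $c_{k}$ are confined to $\mathring{E}$ (this uses the support structure together with stability of the generator); correspondingly only finitely many $d_{m}$, with $m\in[K_{\min},K_{\max}]$, survive and $\operatorname{supp}(Pf)$ is contained in the interval in \eqref{9876tg22}. The injectivity hypothesis on $P_{\mathring{E}}$ is precisely what makes the reduction reversible: distinct $k\in\mathring{E}$ give distinct $Pk$, so each surviving $d_{m}$ equals a single $c_{k}$, and recovering the finite vector $(d_{m})$ is therefore equivalent to recovering $f$. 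Thus it suffices to determine these finitely many coefficients from the prescribed samples.

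\emph{Step 3 (sampling matrix and independence of shifts).} Evaluating $Pf$ at the sample points $(x_{j}+n)/\gamma$ in \eqref{9876tg22} produces a finite linear system $A(X)\,(d_{m})=(\text{samples})$ whose matrix has entries $(P\phi)\bigl((x_{j}+n-m)/\gamma\bigr)$, and determination of $f$ is equivalent to $A(X)$ having full column rank. I would first record that the shifts $\{(P\phi)(\cdot-\tfrac{m}{\gamma})\}$ are linearly independent: from $\sum_{m}d_{m}(P\phi)(\cdot-\tfrac{m}{\gamma})=0$ one gets $\bigl(\sum_{m}d_{m}e^{-\mathbf{i}m\xi/\gamma}\bigr)\widehat{P\phi}(\xi)=0$, and since $\widehat{P\phi}(\xi)=\widehat{\phi}(P^{T}\xi)$ is entire (compact support) and not identically zero (e.g.\ $\widehat{\phi}(0)\neq0$ for a stable refinable generator, which is also where choosing an appropriate angle/direction of $P$ enters), the trigonometric polynomial must vanish, forcing $d=0$.

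\emph{Step 4 (genericity in $X$ — the main obstacle).} The crux is to upgrade linear independence of the shifts to full column rank of $A(X)$ for \emph{almost every} $X$. I would show the bad set $\{X:\operatorname{rank}A(X)<\#\mathring{E}\}$ is Lebesgue-null by exhibiting one maximal minor $D(X)$ that is not identically zero—its non-triviality following from Step 3—and proving that its zero set has measure zero. The delicate point is that $P\phi$ is in general only continuous, so $D$ need not be analytic in $X$ directly; here I would pass to the Fourier/periodization side, where the dependence of the sampled generator on each offset $x_{j}$ enters through entire factors $e^{\mathbf{i}x_{j}(\theta+2\pi\ell)}$, rendering the relevant determinant real-analytic in $X$, so that the principle ``a nontrivial analytic function has a null zero set'' applies; the hypothesis $\#X\geq2$ then supplies enough independent offsets to exclude the simultaneous zeros that can defeat a single offset. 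I expect this measure-zero step, together with the bookkeeping verifying that the interval in \eqref{9876tg22} indeed captures $\operatorname{supp}(Pf)$ and all the needed sample locations, to be the heart of the argument.
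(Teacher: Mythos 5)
Your Steps 1--3 reproduce the paper's own reduction: the identity $Pf(t)=\sum_{m}d_{m}(P\phi)(t-m/\gamma)$, the confinement of the surviving coefficients to $\mathring{E}$, and the fact that injectivity of $P_{\mathring{E}}$ makes recovery of the $d_{m}$ equivalent to recovery of $f$ are exactly the content of the paper's Lemma \ref{main1} and of the rewriting \eqref{YK1}--\eqref{08765}, and your support bookkeeping via Lemma \ref{pouy} matches \eqref{qujian0098}--\eqref{9876tg}. The genuine gap is Step 4. The determinant $D(X)$ of your sampling matrix is a polynomial in the values $(P\phi)((x_{j}+n-m)/\gamma)$, and since $P\phi$ is merely continuous, $D$ is merely a continuous function of the offsets $x_{1},\ldots,x_{s}$; a nonzero continuous function can vanish on a set of positive measure, so the principle ``nontrivial implies null zero set'' is unavailable. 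Your proposed repair does not restore analyticity: the periodization/Zak-type series $\sum_{\ell}\widehat{P\phi}(\theta+2\pi\ell)e^{\mathbf{i}x_{j}(\theta+2\pi\ell)}$ has entire terms but sums back to precisely the same non-analytic (only continuous) function of $x_{j}$, and term-wise analyticity does not pass to the sum. The failure is not a technicality one can talk around: for a general continuous, compactly supported generator with linearly independent integer shifts --- which is all that your Steps 3--4 actually use --- the ``almost every $X$ with $\#X\geq 2$'' claim is false. For instance, take a generator supported on $[0,2]$ vanishing exactly on $S\cup(S+1)$ for a positive-measure set $S\subseteq(0,1)$ (a hat function times a suitable $1$-periodic continuous factor); its integer shifts are linearly independent, yet every sample of every $f=c_{0}\varphi+c_{1}\varphi(\cdot-1)$ at $X+\mathbb{Z}$ with $X\subseteq S$ vanishes, so the sampling matrix is singular on a positive-measure family of $X$.

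What your argument never uses is the hypothesis that drives the theorem: refinability. The paper first proves (Lemma \ref{uyt701}, via \cite[Theorem 7.1.9]{BHBOOK} and \cite[Proposition 4.2]{Hanprojection}, using $\{\gamma Px: x\in\mathbb{Z}^{2}\}=\mathbb{Z}$) that the projected generator $\tilde{P}\phi$, $\tilde{P}=\gamma P$, is again a compactly supported \emph{refinable} function in $L^{2}(\mathbb{R})$, so that $\tilde{P}f\in V(\tilde{P}\phi,\mathbb{Z})$ is a genuine refinable SIS element; it then invokes Q.~Sun's local reconstruction theorem \cite[section 4]{SUNQIY1}, which is exactly the statement that for almost every finite $X\subseteq(0,1)$ with $\#X\geq2$, functions in such a refinable SIS are determined by their samples on $X+\mathbb{Z}$. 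The refinement equation is what powers that genericity; it cannot be replaced by continuity, compact support, and independence of shifts. To complete your proof you would have to either import Sun's theorem at Step 4 (after establishing refinability of $\tilde{P}\phi$, i.e., Lemma \ref{uyt701}) or reprove its mechanism --- soft analyticity considerations will not do it.
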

\begin{proof}
The proof is given in subsection \ref{pppfg}.
\end{proof}

\begin{rem}
In Theorem \ref{ee2},  we assume that  the functions to be reconstructed  are compactly
 supported and their supports are contained in a known rectangle.  Such an   assumption     is reasonable  for CT (c.f. \cite{YXu}).
\end{rem}

\subsection{Design of feasible  projection vector}\label{conditionnn}
In  Theorem  \ref{ee2}  the projection  vector  $P$ is required to satisfy
the following two conditions:

$(A_{1}):$ the map $P_{\mathring{E}}: \mathring{E}\longrightarrow \mathbb{Z}$ defined by $P_{\mathring{E}}(x)=Px$ is  injective.

$(A_{2}):$ there exists $0\neq\gamma\in \mathbb{R}$ such that  $\{\gamma Px: x\in \mathbb{Z}^{2}\}=\mathbb{Z}.$

In what follows, we establish a choice of $P$ such that $(A_{1})$ and $(A_{2})$ hold.

\begin{prop}\label{choiceofprojection}
Define $\mathring{E}^{-}:=\{k-\hat{k}: k, \hat{k}\in \mathring{E}\}$.
For $j=1,2$, denote $\mathring{E}^{-}_{j, \max}:=\{|\langle k, e_{j}\rangle|: k\in \mathring{E}^{-}\}$,
where $e_{j}\in \mathbb{R}^{2}$ is the $j$th unit coordinate vector with the only nonzero entry $1$ at
the $j$th entry.
Choose  $\tilde{P}=[\tilde{P}_{1},  \tilde{P}_{2}]$   recursively  by $\tilde{P}_{1}=1$, $\tilde{P}_{2}=\mathring{E}^{-}_{1, \max}+\tilde{P}_{1}$.  Then $P=\frac{\tilde{P}}{\|\tilde{P}\|_{2}}$ satisfies $(A_{1})$ and $(A_{2}).$
\end{prop}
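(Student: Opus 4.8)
The plan is to verify $(A_1)$ and $(A_2)$ separately, each reducing to elementary integer arithmetic once we record the shape of the constructed vector. Since $\mathring{E}\subseteq\mathbb{Z}^2$, every element of the difference set $\mathring{E}^{-}$ lies in $\mathbb{Z}^2$, so $\mathring{E}^{-}_{1,\max}=\max\{|\langle k,e_1\rangle|:k\in\mathring{E}^{-}\}$ is a nonnegative integer and the recursion produces $\tilde{P}=[\tilde{P}_1,\tilde{P}_2]=[1,\mathring{E}^{-}_{1,\max}+1]$ with positive integer entries. Because $P=\tilde{P}/\|\tilde{P}\|_2$ is a positive scalar multiple of $\tilde{P}$, the linear maps $x\mapsto Px$ and $x\mapsto\tilde{P}x$ share the same kernel; hence throughout I would argue with the integer vector $\tilde{P}$ rather than with $P$.

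For $(A_2)$, I would take $\gamma:=\|\tilde{P}\|_2$, so that $\gamma P=\tilde{P}\in\mathbb{Z}^{1\times2}$ and $\{\gamma Px:x\in\mathbb{Z}^2\}=\{x_1+\tilde{P}_2 x_2:(x_1,x_2)^T\in\mathbb{Z}^2\}$. The integrality of the entries keeps this image inside $\mathbb{Z}$, while setting $x_2=0$ and letting $x_1$ run over $\mathbb{Z}$ (possible because $\tilde{P}_1=1$) already sweeps out all of $\mathbb{Z}$; equivalently $\gcd(\tilde{P}_1,\tilde{P}_2)=1$. Thus the image is exactly $\mathbb{Z}$ and $(A_2)$ holds.

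For $(A_1)$, injectivity of $P_{\mathring{E}}$ is equivalent to the implication $\tilde{P}d=0\Rightarrow d=0$ for every $d=(d_1,d_2)^T\in\mathring{E}^{-}$, since $Pk=P\hat{k}\Leftrightarrow\tilde{P}(k-\hat{k})=0$ and $k-\hat{k}$ ranges over $\mathring{E}^{-}$ as $k,\hat{k}$ range over $\mathring{E}$. From $\tilde{P}d=d_1+\tilde{P}_2 d_2=0$ we get $d_1=-\tilde{P}_2 d_2$, and the one real idea of the argument is the size control deliberately built into $\tilde{P}_2=\mathring{E}^{-}_{1,\max}+1$: by definition $|d_1|\leq\mathring{E}^{-}_{1,\max}$, so $d_2\neq0$ would force $|d_1|=\tilde{P}_2|d_2|\geq\tilde{P}_2>\mathring{E}^{-}_{1,\max}$, a contradiction. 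Hence $d_2=0$, then $d_1=0$, giving $d=0$. The argument is short, and the only point requiring care — where I would focus — is this choice of $\tilde{P}_2$ strictly exceeding the maximal first-coordinate spread of $\mathring{E}^{-}$, which simultaneously blocks cancellation in $d_1+\tilde{P}_2 d_2$ (yielding $(A_1)$) and keeps $\tilde{P}_1=1$ coprime to $\tilde{P}_2$ (yielding $(A_2)$).
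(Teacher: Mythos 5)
Your proposal is correct and follows essentially the same route as the paper: both arguments reduce $(A_1)$ to showing $\tilde{P}d\neq 0$ for $0\neq d\in\mathring{E}^{-}$ and exploit the size control $\tilde{P}_2=\mathring{E}^{-}_{1,\max}+1>|d_1|$ (the paper via the reverse triangle inequality giving $|\langle\tilde{P},d\rangle|\geq 1$, you via a direct contradiction from $d_1=-\tilde{P}_2 d_2$), and both verify $(A_2)$ with $\gamma=\|\tilde{P}\|_2$ using $\tilde{P}_1=1$. The only difference is expository: you spell out the reduction from $P$ to $\tilde{P}$ and the surjectivity onto $\mathbb{Z}$, which the paper leaves implicit.
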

\begin{proof}
It is easy to prove that  $(A_{1})$ holds if and only if
$\langle \tilde{P},  k\rangle\neq0$ for any fixed  $\textbf{0}\neq k=(k_{1},   k_{2})\in \mathring{E}^{-}.$
Denote  $J_{k}=\max\{j: k_{j}\neq0\}$.
If $J_{k}=1$, then clearly $|\langle \tilde{P},  k\rangle|\geq1$.
If $J_{k}=2$, then   $$|\langle \tilde{P},  k\rangle|\geq |\tilde{P}_{2}k_{2}|-|\tilde{P}_{1}k_{1}|=
|\tilde{P}_{2}k_{2}|-|k_{1}|\geq |\tilde{P}_{2}|-|k_{1}|\geq |\tilde{P}_{2}|-\mathring{E}^{-}_{1, \max}=1.$$
Then $(A_{1})$ holds.
Since $\langle \tilde{P},  e_{1}\rangle=\tilde{P}_{1}=1$ then  $(A_{2})$ holds with $\gamma=\|\tilde{P}\|_{2}$.
\end{proof}

\subsection{Proof of Theorem \ref{ee2}}\label{pppfg}
\subsubsection{Two lemmas}

In the following lemma, we establish a necessary and sufficient condition on the generator Radon transform  $P\phi$,
such that any compactly supported $f\in V(\phi, \mathbb{Z}^{2})$ can be determined by its  SA-Radon transform  $Pf.$

\begin{lem}\label{main1}
Suppose that  $\phi\in  L^{2}(\mathbb{R}^{2})$ is refinable  such that    $\hbox{supp}(\phi)\subseteq[N_{1}, M_{1}]\times
[N_{2}, M_{2}]$.
Then  any $f\in V_{E}(\phi, \mathbb{Z}^{2})$ can be determined by its  Radon transform   $Pf$ if and only if
$\{P\phi(\cdot-Pk): k\in \mathring{E}\}$
is a basis for  $\hbox{span}\{P\phi(\cdot-Pk): k\in \mathring{E}\}$, where $E=[a_{1}, b_{1}]\times [a_{2}, b_{2}]$ and
\begin{align}\label{ee1} \mathring{E}=\Big\{\Big[\lceil a_{1}-M_{1}\rceil, \lfloor b_{1}-N_{1}\rfloor\Big]\times
\Big[\lceil a_{2}-M_{2}\rceil, \lfloor b_{2}-N_{2}\rfloor\Big]\Big\}\cap \mathbb{Z}^{2}.\end{align}
%
%
\end{lem}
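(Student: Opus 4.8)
The plan is to reduce the reconstruction question to a finite-dimensional linear-independence statement about the real-shifted profiles $P\phi(\cdot-Pk)$, and then read off both implications. First I would record how the projection interacts with integer translation. For $k\in\mathbb{Z}^{2}$ the shift rule of the Fourier transform gives $\widehat{P(\phi(\cdot-k))}(\xi)=\widehat{\phi(\cdot-k)}(P^{T}\xi)=e^{-\textbf{i}\langle k,P^{T}\xi\rangle}\widehat{\phi}(P^{T}\xi)=e^{-\textbf{i}(Pk)\xi}\,\widehat{P\phi}(\xi)$, since $\langle k,P^{T}\xi\rangle=(Pk)\xi$. Hence $P(\phi(\cdot-k))=P\phi(\cdot-Pk)$, and writing $f=\sum_{k}c_{k}\phi(\cdot-k)$, linearity of the projection together with $Pf\in L^{2}(\mathbb{R})$ from Lemma \ref{pouy} yields $Pf=\sum_{k}c_{k}\,P\phi(\cdot-Pk)$. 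Thus the Radon data of $f$ is a superposition of copies of the single profile $P\phi$, shifted by the real numbers $Pk$ and weighted by the $c_{k}$.

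Next I would localise the coefficient support to $\mathring E$. Since $\hbox{supp}(\phi(\cdot-k))=[N_{1}+k_{1},M_{1}+k_{1}]\times[N_{2}+k_{2},M_{2}+k_{2}]$, this rectangle meets $E=[a_{1},b_{1}]\times[a_{2},b_{2}]$ exactly when $a_{i}-M_{i}\le k_{i}\le b_{i}-N_{i}$ for $i=1,2$; because $k\in\mathbb{Z}^{2}$ this is precisely the condition $k\in\mathring E$ of \eqref{ee1}. Using that the integer shifts of $\phi$ are linearly independent (the Riesz-basis hypothesis \eqref{chuan}), I would argue that any $f\in V_{E}(\phi,\mathbb{Z}^{2})$ has $c_{k}=0$ for all $k\notin\mathring E$: a translate whose support is disjoint from $E$ contributes on a region where $f$ vanishes, and linear independence prevents this contribution from being cancelled by the remaining translates. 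Consequently every $f\in V_{E}(\phi,\mathbb{Z}^{2})$ admits the unique representation $f=\sum_{k\in\mathring E}c_{k}\phi(\cdot-k)$, and $f\leftrightarrow(c_{k})_{k\in\mathring E}$ is a bijection.

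With these reductions the equivalence becomes pure linear algebra. Determining $f$ from $Pf$ is the same as recovering the finite vector $(c_{k})_{k\in\mathring E}$ from $Pf=\sum_{k\in\mathring E}c_{k}\,P\phi(\cdot-Pk)$, which is possible for every admissible $f$ if and only if the linear map $(c_{k})_{k\in\mathring E}\mapsto\sum_{k\in\mathring E}c_{k}\,P\phi(\cdot-Pk)$ is injective, that is, if and only if $\{P\phi(\cdot-Pk):k\in\mathring E\}$ is linearly independent, equivalently a basis for its own span. For the ``if'' direction I would take $f_{1},f_{2}\in V_{E}$ with $Pf_{1}=Pf_{2}$, subtract, and use injectivity to force their coefficient vectors to agree; for the ``only if'' direction I would show that a nontrivial dependence relation among the $P\phi(\cdot-Pk)$ yields two distinct coefficient vectors with identical Radon data, hence a function in $V_{E}$ not determined by its projection.

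\emph{The main obstacle} is the coefficient localisation of the second paragraph: establishing $c_{k}=0$ off $\mathring E$ is precisely what lets one pass from the global sum over $\mathbb{Z}^{2}$ to the finite sum over $\mathring E$, and it genuinely relies on linear independence of the integer shifts together with the support bookkeeping. I would also take care in the ``only if'' direction to ensure the resulting non-uniqueness lives inside $V_{E}$ itself rather than only in the larger span of the $\phi(\cdot-k)$.
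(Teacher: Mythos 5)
Your proposal follows essentially the same route as the paper's own proof: the same Fourier-transform computation reducing the Radon data to $Pf=\sum_{k\in\mathring{E}}c_{k}\,P\phi(\cdot-Pk)$, the same localisation of coefficients to $\mathring{E}$ (which the paper simply asserts in its equation \eqref{XK}, whereas you at least sketch a justification via linear independence), and the same linear-algebra equivalence, with the identical dependence-relation construction in the ``only if'' direction. The obstacle you flag at the end is genuine, but it is equally unresolved in the paper: the paper's counterexample $\widetilde{f}=\sum_{k\in\mathring{E}}c_{k}\phi(\cdot-k)$ is never shown to satisfy $\hbox{supp}(\widetilde{f})\subseteq E$ (translates indexed by $\mathring{E}$ generally protrude outside $E$), so it need not belong to $V_{E}(\phi,\mathbb{Z}^{2})$ at all, and the claimed contradiction is only valid if ``determination'' is read as injectivity on the larger span $\hbox{span}\{\phi(\cdot-k):k\in\mathring{E}\}$ rather than on $V_{E}$; in naming this difficulty rather than silently passing over it, your write-up is, if anything, more careful than the paper's.
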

\begin{proof}
Necessity: For   $f\in V_{E}(\phi, \mathbb{Z}^{2})$, it follows from    $\hbox{supp}(f)\subseteq E$ that
\begin{align}\label{XK} f=\sum^{\lfloor b_{1}-N_{1}\rfloor}_{k_{1}=\lceil a_{1}-M_{1}\rceil} \sum^{\lfloor b_{2}-N_{2}\rfloor}_{k_{2}=\lceil a_{2}-M_{2}\rceil}c_{k_{1},  k_{2}}\phi(\cdot-k_{1},   \cdot-k_{2}).\end{align}
Then for any $\xi\in \mathbb{R}$, we have
\begin{align}\begin{array}{llll} \label{representation}\widehat{Pf}(\xi)&=\widehat{f}(P^{T}\xi)&=\displaystyle\sum^{\lfloor b_{1}-N_{1}\rfloor}_{k_{1}=\lceil a_{1}-M_{1}\rceil} \sum^{\lfloor b_{2}-N_{2}\rfloor}_{k_{2}=\lceil a_{2}-M_{2}\rceil}c_{k_{1},  k_{2}}e^{-\textbf{i}k\cdot P^{T}\xi}\widehat{\phi}(P^{T}\xi)\\
&&=\displaystyle\sum^{\lfloor b_{1}-N_{1}\rfloor}_{k_{1}=\lceil a_{1}-M_{1}\rceil} \sum^{\lfloor b_{2}-N_{2}\rfloor}_{k_{2}=\lceil a_{2}-M_{2}\rceil}c_{k_{1},  k_{2}}e^{-\textbf{i}Pk\cdot\xi}\widehat{P\phi}(\xi),\end{array}\end{align}
where $k=(k_{1},  k_{2})^{T}$. That is,
\begin{align}\label{YK} Pf=\sum^{\lfloor b_{1}-N_{1}\rfloor}_{k_{1}=\lceil a_{1}-M_{1}\rceil} \sum^{\lfloor b_{2}-N_{2}\rfloor}_{k_{2}=\lceil a_{2}-M_{2}\rceil}c_{k_{1},  k_{2}}P\phi(\cdot-Pk)\in \hbox{span}\{P\phi(\cdot-Pk): k\in \mathring{E}\}.\end{align}
If $\{P\phi(\cdot-Pk): k\in \mathring{E}\}$
is a basis for $\hbox{span}\{P\phi(\cdot-Pk): k\in \mathring{E}\}$, then the map  $P_{\mathring{E}}: \mathring{E}\longrightarrow Px, x\in \mathring{E}$
is an  injection. For convenient narration we  denote $\{Px: x\in \mathring{E}\}$ by $\Lambda$.
Then there exists uniquely  $\{\widehat{c}_{j}\}_{j\in \Lambda}$ such that   $Pf=\sum_{j\in \Lambda}\widehat{c}_{j}P\phi(\cdot-j)$.
Comparing \eqref{XK} and \eqref{YK}, we have $f=\sum_{j\in \Lambda}\widehat{c}_{j}\phi(\cdot-P^{-1}_{\mathring{E}}(j))$.

Sufficiency:  Without loss of generality, suppose  that the dimension  $\hbox{dim}(\hbox{span}\{P\phi(\cdot-Pk): k\in \mathring{E}\})\geq1.$
If $\{P\phi(\cdot-Pk): k\in \mathring{E}\}$
is not a basis for $\hbox{span}\{P\phi(\cdot-Pk): k\in \mathring{E}\}$, then
 there exists a nonzero sequence  $\{c_{k}\}_{k\in \mathring{E}}$
 such that $0\equiv\sum_{k\in \mathring{E}}c_{k}P\phi(\cdot-Pk)$.
 Therefore, $0\not\equiv\widetilde{f}:=\sum_{k\in \mathring{E}}c_{k}\phi(\cdot-k)$ is not distinguishable from $ g\equiv0\in V_{E}(\phi, \mathbb{Z}^{D})$
since they have the same projection. This leads to a contradiction.
%
%
%
\end{proof}

\begin{lem}\label{uyt701}
Suppose that  $\phi\in  L^{2}(\mathbb{R}^{2})$ is refinable  and $\tilde{P}\in \mathbb{Z}^{1\times 2}$ such that $\{\tilde{P}x: x\in \mathbb{Z}^{2}\}=\mathbb{Z}$.
Then $\tilde{P}\phi$ defined by $\widehat{\tilde{P}\phi}(\xi)=\widehat{\phi}(\tilde{P}^{T}\xi)$ is compactly supported  and  refinable    in $L^{2}(\mathbb{R}).$
\end{lem}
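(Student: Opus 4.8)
The plan is to separate the two claims: compact support (together with membership in $L^2(\mathbb{R})$) will be obtained for free from the already-proved Lemma \ref{pouy}, while refinability will follow from a short direct manipulation of the refinement mask. First I would record that $\phi$ is compactly supported. This is the standing situation of Theorem \ref{ee2} (where $\hbox{supp}(\phi)\subseteq[N_1,M_1]\times[N_2,M_2]$), and more intrinsically it is forced by the finite-mask hypothesis: iterating $\widehat\phi(2\xi)=\widehat a(\xi)\widehat\phi(\xi)$ gives $\widehat\phi(\xi)=\widehat\phi(0)\prod_{j\ge1}\widehat a(2^{-j}\xi)$, an entire function of exponential type, so by Paley--Wiener $\phi$ has compact support. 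With $\phi$ compactly supported I would then apply Lemma \ref{pouy} to $f=\phi$ and the given integer vector $\tilde P$; that lemma yields at once that $\tilde P\phi$ is supported on an explicit bounded interval and that $\tilde P\phi\in L^2(\mathbb{R})$.

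For refinability the computation is immediate. Since $\widehat{\tilde P\phi}(\xi)=\widehat\phi(\tilde P^T\xi)$, I would evaluate at $2\xi$ and insert the refinement relation with $\eta=\tilde P^T\xi$:
\begin{align*}
\widehat{\tilde P\phi}(2\xi)=\widehat\phi\bigl(2\tilde P^T\xi\bigr)=\widehat a\bigl(\tilde P^T\xi\bigr)\,\widehat\phi\bigl(\tilde P^T\xi\bigr)=\widehat a\bigl(\tilde P^T\xi\bigr)\,\widehat{\tilde P\phi}(\xi).
\end{align*}
Thus the candidate one-dimensional mask is $\widehat b(\xi):=\widehat a(\tilde P^T\xi)$, and the task reduces to checking that $\widehat b$ is a $2\pi\mathbb{Z}$-periodic trigonometric polynomial.

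To verify this I would use the finite expansion $\widehat a(\eta)=\sum_{k\in\mathbb{Z}^2}a_k e^{-\textbf{i}k\cdot\eta}$ and substitute $\eta=\tilde P^T\xi$, which gives $\widehat b(\xi)=\sum_{k}a_k e^{-\textbf{i}(\tilde Pk)\xi}$. Because $\tilde P\in\mathbb{Z}^{1\times2}$ and $k\in\mathbb{Z}^2$, every frequency $\tilde Pk$ is an integer, so $\widehat b$ is a trigonometric polynomial with integer frequencies and is therefore $2\pi\mathbb{Z}$-periodic; the hypothesis $\{\tilde Px:x\in\mathbb{Z}^2\}=\mathbb{Z}$ (equivalently $\gcd$ of the entries of $\tilde P$ equals $1$) ensures these frequencies exhaust $\mathbb{Z}$, matching the integer-shift structure on $\mathbb{R}$. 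This exhibits $\tilde P\phi$ as refinable in $L^2(\mathbb{R})$ with mask $\widehat b$.

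The algebra here is routine; the one point deserving care is the compact-support claim, since the lemma is stated only under the bare word ``refinable.'' The cleanest route, which I would follow, is not to reprove support by hand but to quote Lemma \ref{pouy}, which already packages both the compact-support and the $L^2$ conclusions, after recording that the finite-mask assumption makes $\phi$ compactly supported. I expect the integrality $\tilde P\in\mathbb{Z}^{1\times2}$ to be the crucial structural hypothesis throughout: it is exactly what converts the $2\pi\mathbb{Z}^2$-periodicity of $\widehat a$ into the one-dimensional $2\pi\mathbb{Z}$-periodicity of $\widehat b$, and it is what Lemma \ref{pouy} relies on to keep the supports integer-compatible.
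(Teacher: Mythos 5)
Your handling of refinability is correct and takes a genuinely different route from the paper's. You verify the paper's own definition directly: $\widehat{\tilde P\phi}(2\xi)=\widehat a(\tilde P^{T}\xi)\,\widehat{\tilde P\phi}(\xi)$, and the substituted mask $\widehat b(\xi)=\widehat a(\tilde P^{T}\xi)=\sum_{k}a_{k}e^{-\textbf{i}(\tilde Pk)\xi}$ is a trigonometric polynomial with integer frequencies, hence $2\pi\mathbb{Z}$-periodic. The paper instead invokes Han's projection machinery: it quotes \cite[Theorem 7.1.9]{BHBOOK}, which gives refinability of $\tilde P\phi$ \emph{together with} linear independence of $\{\tilde P\phi(\cdot-k)\}_{k\in\mathbb{Z}}$ under a parity condition on $\tilde P$, and then \cite[Proposition 4.2]{Hanprojection} to deduce that condition from $\{\tilde Px:x\in\mathbb{Z}^{2}\}=\mathbb{Z}$. (As printed, the paper's condition $\tilde P^{T}(\mathbb{Z}\setminus[2\mathbb{Z}])\subseteq2\mathbb{Z}^{2}$ is evidently garbled: it would force both entries of $\tilde P$ to be even, contradicting the gcd hypothesis; your direct verification sidesteps this.) Your computation is more elementary and self-contained, and it shows the gcd hypothesis is not needed for the bare refinability claim — integer frequencies already give $2\pi$-periodicity, so your remark that these frequencies ``exhaust $\mathbb{Z}$'' is neither true nor needed. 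What the paper's citation buys in addition is the linear independence of the integer shifts of $\tilde P\phi$, which the lemma does not assert but which matters downstream in Theorem \ref{ee2}. Both you and the paper get compact support of $\tilde P\phi$ from Lemma \ref{pouy}.

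The one genuine flaw is your ``intrinsic'' claim that refinability with a finite mask already forces $\phi$ to be compactly supported. The identity $\widehat\phi(\xi)=\widehat\phi(0)\prod_{j\geq1}\widehat a(2^{-j}\xi)$ comes from iterating $\widehat\phi(\xi)=\bigl[\prod_{j=1}^{n}\widehat a(2^{-j}\xi)\bigr]\widehat\phi(2^{-n}\xi)$ and letting $n\to\infty$, which requires $\widehat a(0)=1$ and continuity of $\widehat\phi$ at the origin; neither is part of the paper's definition of refinable, and an $L^{2}$ function has no well-defined pointwise value at $0$ at all. In fact the implication is false: for the Haar mask $\widehat a(\xi)=(1+e^{-\textbf{i}\xi})/2$ one can prescribe $\widehat\phi$ to be an arbitrary bounded measurable function on the dyadic fundamental domain $\{1\leq|\xi|<2\}$ and extend to all of $\mathbb{R}$ by the refinement relation; the resulting function lies in $L^{2}(\mathbb{R})$ and satisfies the refinement equation, but for generic data (e.g.\ $\widehat\phi=\chi_{[1,3/2)}$ on the fundamental domain) it vanishes on a set of positive measure without vanishing identically, so it is not a.e.\ equal to any entire function and the corresponding $\phi$ is not compactly supported. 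Hence the compact-support step must rest on the standing hypothesis of Theorem \ref{ee2}, namely $\hbox{supp}(\phi)\subseteq[N_{1},M_{1}]\times[N_{2},M_{2}]$ — your first-mentioned fallback, and exactly what the paper does tacitly, since Lemma \ref{pouy} requires compactly supported input. With the Paley--Wiener shortcut deleted and that assumption made explicit, your proof is complete.
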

\begin{proof}
By Lemma \ref{pouy}, $\tilde{P}\phi$ is compactly supported. By \cite[Theorem 7.1.9]{BHBOOK},
if $\tilde{P}^{T}(\mathbb{Z}\setminus[2\mathbb{Z}])\subseteq 2\mathbb{Z}^{2}$ then
$\tilde{P}\phi$ is still  refinable such that $\{\tilde{P}\phi(\cdot-k)\}_{k\in \mathbb{Z}}$ is linearly independent. Since $\{\tilde{P}x: x\in \mathbb{Z}^{2}\}=\mathbb{Z},$
then it follows from \cite[Proposition 4.2]{Hanprojection} that $\tilde{P}^{T}(\mathbb{Z}\setminus[2\mathbb{Z}])\subseteq 2\mathbb{Z}^{2}$
is satisfied. Now the proof can be easily  concluded.
\end{proof}


\subsubsection{Proof of Theorem \ref{ee2}}
Denote $\tilde{P}=\gamma P$. It is straightforward to check that $Pf(x)=\gamma\tilde{P}f(\gamma x)$.
As in \eqref{YK}, we have
\begin{align}\label{YK1} \tilde{P}f=\sum^{\lfloor b_{1}-N_{1}\rfloor}_{k_{1}=\lceil a_{1}-M_{1}\rceil} \sum^{\lfloor b_{2}-N_{2}\rfloor}_{k_{2}=\lceil a_{2}-M_{2}\rceil}c_{k_{1},  k_{2}}\tilde{P}\phi(\cdot-\tilde{P}k)\in \hbox{span}\{\tilde{P}\phi(\cdot-\tilde{P}k): k\in \mathring{E}\}.\end{align}
It follows from Lemma \ref{uyt701} that $\tilde{P}\phi$ is compactly supported  and  refinable    in $L^{2}(\mathbb{R}).$
Moreover, by \cite[Lemma 2.4]{Hanprojectionold} we have $\nu_{2}(\tilde{P}\phi)\geq \nu_{2}(\phi)\geq 1$,  which together with   \cite[Chapter 9.1]{Matla} (or \cite[section 1]{LHYH}) leads to that   $P\phi$
 is continuous.  On the other hand, since $\{\tilde{P}x: x\in \mathbb{Z}^{2}\}=\mathbb{Z}$ then
$\tilde{P}f\in V(\tilde{P}\phi, \mathbb{Z})$.
 Now  it follows from  \cite[section 4]{SUNQIY1} that $\tilde{P}f$ (or the sequence $\{c_{k_{1},  k_{2}}\}$ in \eqref{YK1}) can determined by    its  samples  at $X+\mathbb{Z}$.  Since the map  $P_{\mathring{E}}: \mathring{E}\longrightarrow Px, x\in \mathring{E}$
is injective, then so is the map $\tilde{P}_{\mathring{E}}: \mathring{E}\longrightarrow \tilde{P}x, x\in \mathring{E}$.
Now by \eqref{YK1} and   Lemma \ref{uyt701}, $Pf$ can be rewritten as
 \begin{align}\label{08765} Pf=\sum_{j\in \Lambda}\widehat{c}_{j}P\phi(\cdot-j),\end{align}
where $\Lambda=\{\tilde{P}x: x\in \mathring{E}\}$ and $\{\widehat{c}_{j}: j\in \Lambda\}$
is a rearrangement of $\{c_{k_{1},  k_{2}}\}$.
 Denote $K_{\min}=\min\{\tilde{P}k: k\in \mathring{E}\}$ and
 $K_{\max}=\max\{\tilde{P}k: k\in \mathring{E}\}$.
 Then  \begin{align}\label{iuytg} \Lambda\subseteq\{
 K_{\min}, K_{\min}+1, \ldots, K_{\max}\}.\end{align}
 By Lemma \ref{pouy}, we have
 \begin{align}\label{qujian0098} \hbox{supp}(\tilde{P}\phi)\subseteq \big[-\sqrt{2}|\gamma|\max\{|N_{i}|, |M_{i}|: i=1, 2\}, \sqrt{2}|\gamma|\max\{|N_{i}|, |M_{i}|: i=1, 2\}\big].\end{align}
 Now it follows from \eqref{08765},  \eqref{iuytg} and \eqref{qujian0098} that  $\tilde{P}f$ can be determined by its samples at
 \begin{align}\label{9876tg}\begin{array}{lllllllll}
 \big(X+\mathbb{Z}\big)\\
 \bigcap \Big[-\sqrt{2}|\gamma|\max\{|N_{i}|, |M_{i}|: i=1, 2\}+K_{\min}, \sqrt{2}|\gamma|\max\{|N_{i}|, |M_{i}|: i=1, 2\}+K_{\max}\Big].
 \end{array}
 \end{align}
 Equivalently, $Pf$ can be determined by its samples at
 \begin{align}\label{9876tg2}\begin{array}{lllllllll}
 \big(X/\gamma+\mathbb{Z}/\gamma\big)\\
 \bigcap \big[-\sqrt{2}\max\{|N_{i}|, |M_{i}|: i=1, 2\}+K_{\min}, \sqrt{2}\max\{|N_{i}|, |M_{i}|: i=1, 2\}+K_{\max}\big].
 \end{array}
 \end{align}
 The proof is concluded by Lemma \ref{main1}.

\section{Reconstruction   of    functions in    SISs generated from a class of refinable  splines by SA-Radon samples}
\label{boxspline}
In this section, we are interested in the SA problem \eqref{SPP}  for   compactly supported  functions  in the SISs   generated by  a class of refinable  box-splines. The main result will be  organized  in Theorem \ref{optimal}. Due to
the positive definite (PD) property of
the Radon transform at any angle of such splines, the theorem  enjoys great angle  flexibility.
In what follows, we   introduce     the  PD functions and the determination  of functions in their generated QSISs.

\subsection{PD functions and the determination  of functions in the  generated QSISs}
We start with the  definition of a PD  function
which is extensively   applied to interpolation of scattered data (c.f. \cite{Karimi,Wendlan}).
\begin{defi}\label{definitionofPD}
A function $\varphi: \mathbb{R}^{D}\longrightarrow \mathbb{C}$
is  PD if, for all $N\in \mathbb{N}$, all sets $X=\{x_{1}, x_{2}, \ldots, x_{N}\}\subseteq \mathbb{R}^{D}$ of pairwise distinct elements, and all $0\neq \alpha=(\alpha_{1}, \ldots, \alpha_{N})\in \mathbb{C}^{N}$, the quadratic form
\begin{align}\label{zhengding}
\sum^{N}_{j=1}\sum^{N}_{k=1}\alpha_{j}\overline{\alpha}_{k}\varphi(x_{j}-x_{k})>0.
\end{align}
\end{defi}

The following proposition  concerns on the determination  of functions by  the PD property. It can be easily
proved by  \eqref{zhengding}.

\begin{prop}\label{6758}
If $\varphi: \mathbb{R}^{D}\longrightarrow \mathbb{C}$ is PD, then $\varphi(0)>0,$ and for any set  $X=\{x_{1}, x_{2}, \ldots, x_{N}\}\subseteq \mathbb{R}^{D}$,   the system   $\{\varphi(\cdot-x_{k})\}^{N}_{k=1}$ is linearly independent. Moreover, any function $f=\sum^{N}_{k=1}\alpha_{k}\varphi(\cdot-x_{k})$ can be determined by its samples  at $X$.
\end{prop}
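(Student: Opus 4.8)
The plan is to reduce all three assertions to a single fact: the \emph{interpolation matrix} $A := [\varphi(x_j - x_k)]_{j,k=1}^{N}$ is nonsingular. Everything else follows by elementary linear algebra, so the real work is to extract nonsingularity from the defining inequality \eqref{zhengding}.

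First I would dispose of the claim $\varphi(0) > 0$ by specializing Definition \ref{definitionofPD} to $N = 1$: taking $X = \{x_1\}$ and any $\alpha_1 \neq 0$, the quadratic form collapses to $|\alpha_1|^{2}\varphi(0)$, and since \eqref{zhengding} forces this to be positive while $|\alpha_1|^{2} > 0$, we conclude $\varphi(0) > 0$.

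Next, for a general pairwise distinct set $X = \{x_1, \ldots, x_N\}$, I would observe that the quadratic form in \eqref{zhengding} is exactly $\overline{\alpha}^{T} A^{T} \alpha$, viewing $\alpha$ as a column vector (a short index-swap verifies $\sum_{j,k}\alpha_j\overline{\alpha_k}\varphi(x_j-x_k) = \overline{\alpha}^{T} A^{T}\alpha$). The crux is that positivity for every $0 \neq \alpha$ forces $A^{T}$, and hence $A$, to be nonsingular: were $A^{T}$ singular, there would exist $0 \neq \beta$ with $A^{T}\beta = 0$, and then the form evaluated at $\beta$ would vanish, contradicting \eqref{zhengding}. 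This is the one step that genuinely uses the definition, and the only subtlety is the transpose bookkeeping, which I would write out but which is routine.

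With $A$ nonsingular in hand, the remaining two claims are immediate. For linear independence, suppose $\sum_{k=1}^{N}\alpha_k \varphi(\cdot - x_k) \equiv 0$; sampling this identity at the points $x_1, \ldots, x_N$ yields the system $A\alpha = 0$, whence $\alpha = 0$ by nonsingularity. For the determination claim, I would note that the sample vector of $f = \sum_{k}\alpha_k \varphi(\cdot - x_k)$ at $X$ is precisely $\big(f(x_j)\big)_{j=1}^{N} = A\alpha$, so that $\alpha = A^{-1}\big(f(x_j)\big)_{j=1}^{N}$ is uniquely recovered from the samples, and therefore so is $f$. I do not expect a genuine obstacle here, consistent with the remark that the statement ``can be easily proved by \eqref{zhengding}''; the only care needed is the matrix-versus-transpose identification in the middle step.
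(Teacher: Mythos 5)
Your proposal is correct and takes essentially the same approach as the paper: the paper gives no detailed argument, stating only that the proposition ``can be easily proved by \eqref{zhengding}'', and your reduction of all three claims to the nonsingularity of the interpolation matrix $[\varphi(x_j-x_k)]_{j,k=1}^{N}$, extracted directly from the positivity of the quadratic form, is precisely that intended argument (your transpose bookkeeping and the $N=1$ specialization for $\varphi(0)>0$ are both correct).
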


Next we  introduce  why we require the  PD property  for the  SA problem \eqref{SPP}.
Recall that in the  proof of   Theorem \ref{ee2},
the reconstruction  of $f\in V(\phi, \mathbb{Z})$ depends on  that of its   Radon transform  $Pf$.
For the projection vector $P$, as in section \ref{conditionnn} (A1) it is required that
there exists $\gamma\in \mathbb{R}$ such that  $\tilde{P}:=\gamma P\in \mathbb{Z}^{1\times 2}$.
The condition $\tilde{P}\in \mathbb{Z}^{1\times 2}$ guarantees that $\tilde{P}\phi$ is still   refinable   and
$\tilde{P}f$ lies in the SIS $ V(\tilde{P}\phi, \mathbb{Z})$. By the relationship $Pf(x)=\gamma\tilde{P}f(\gamma x)$, we can
reconstruct $Pf$ by the   sampling method in the    SIS $V(\tilde{P}\phi, \mathbb{Z})$.
For  a general  refinable function
$\phi\in L^{2}(\mathbb{R}^{2})$, however,  if     $P$  does not satisfy    (A1)  then
Theorem \ref{ee2} probably  fails. In this section we will investigate the SA problem \eqref{SPP} for the SISs generated
by a class of   refinable box-splines. It will be clear in Theorem \ref{optimal} that   (A1)
is not necessary for  such  refinable functions  since their Radon transforms at any angle  are  PD, which admits the reconstruction in the QSISs by Proposition \ref{6758}. Stated another way, the reconstruction will have
great flexibility for choosing the angle of Radon transform.

\subsection{The second main theorem}
We first introduce   a class of refinable box-splines which admit the PD property of their Radon transforms at any angle.
The $m$th cardinal B-spline $B_{m}$ is defined
by
$B_{m}:=\overbrace{\chi_{(0,1]}\star\ldots\star\chi_{(0,1]}}^{m\ \hbox{copies}}$,
where $m\in \mathbb{N}$,  $\chi_{(0,1]}$ is the characteristic function of $(0,1],$
and $\star$ is the convolution.
Through the  simple calculation we have $\hbox{supp}(B_{m})=[0, m]$, and   $\widehat{B_{m}}(\xi)=e^{-\textbf{i} m\xi/2}[\frac{\sin\xi/2}{\xi/2}]^{m} $
by which it is easy to check that
$B_{m}$ is refinable (c.f. \cite{Chui}).
Define $\varphi_{2n}=B_{2n}(\cdot+n)$, $n\in \mathbb{N}$. Then $\widehat{\varphi_{2n}}(\xi)=[\frac{\sin\xi/2}{\xi/2}]^{2n}$.
Moreover, through the  tensor product  we  define $\phi_{B}: \mathbb{R}^{2}\longrightarrow \mathbb{R}$ by
\begin{align}\label{eeee17845} \phi_{B}(x_{1},  x_{2})=\prod^{2}_{k=1}\varphi_{2n_{k}}(x_{k}), \ n_{k}\in \mathbb{N}.
\end{align}
Clearly, \begin{align} \label{oppp} \widehat{\phi_{B}}(\xi_{1}, \xi_{2})=
\prod^{2}_{k=1}\Big[\frac{\sin\xi_{k}/2}{\xi_{k}/2}\Big]^{2n_{k}}\end{align} and $\phi_{B}$ is still refinable.
Moreover, $\hbox{supp}(\phi_{B})=[-n_{1}, n_{1}]\times [-n_{2}, n_{2}]$.

\begin{prop}\label{ertyfc}
Let $\phi_{B}$ be as in \eqref{eeee17845}. Then   for any projection vector $\textbf{0}\neq P\in \mathbb{R}^{1\times 2}$, the Radon transform  $P\phi_{B}$
is PD and even.
\end{prop}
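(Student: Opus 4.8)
The plan is to pass to the Fourier side and reduce strict positive definiteness to a Bochner-type positivity argument. By the slice identity \eqref{Radonform} (equivalently the projection convention \eqref{hansform}) together with \eqref{oppp}, the Radon transform has Fourier transform
\begin{align}
\widehat{P\phi_B}(\xi)=\widehat{\phi_B}(P^{T}\xi)=\prod_{k=1}^{2}\Big[\frac{\sin(P_{k}\xi/2)}{P_{k}\xi/2}\Big]^{2n_{k}},\qquad \xi\in\mathbb{R},
\end{align}
where $P=[P_{1},P_{2}]$ and the $k$th factor is read as its removable value $1$ when $P_{k}=0$. First I would observe that each factor is a real quantity raised to the \emph{even} power $2n_{k}$, so $\widehat{P\phi_B}(\xi)\ge 0$ for every $\xi$; moreover $\widehat{P\phi_B}(\xi)=0$ only when $\sin(P_{k}\xi/2)=0$ for some $k$ with $P_{k}\neq 0$, i.e. only on the discrete (hence measure-zero) set $\bigcup_{k:\,P_{k}\neq 0}(2\pi/P_{k})(\mathbb{Z}\setminus\{0\})$. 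Thus $\widehat{P\phi_B}>0$ for almost every $\xi$.

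Next I would record the analytic regularity needed to run the positivity computation. Since $\phi_B$ is a compactly supported box-spline, Lemma \ref{pouy} gives $P\phi_B\in L^{2}(\mathbb{R})$ with compact support, hence $P\phi_B\in L^{1}(\mathbb{R})$. On the Fourier side $\widehat{\phi_B}$ decays like $(1+|\eta_{1}|)^{-2n_{1}}(1+|\eta_{2}|)^{-2n_{2}}$, so along the ray $\xi\mapsto P^{T}\xi$ at least the factor with $P_{k}\neq 0$ (there is one, since $P\neq\mathbf{0}$) contributes decay of order $|\xi|^{-2n_{k}}$ with $2n_{k}\ge 2$; hence $\widehat{P\phi_B}\in L^{1}(\mathbb{R})$ and $P\phi_B$ agrees with a continuous function given pointwise by the Fourier inversion formula. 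Because $\widehat{P\phi_B}$ is real and even in $\xi$ (it is the restriction of $\widehat{\phi_B}$, which is real and even in each argument, to the line through $P^{T}$), the function $P\phi_B$ is real-valued and even; this already yields the ``even'' assertion.

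For strict positive definiteness in the sense of Definition \ref{definitionofPD}, fix $N\in\mathbb{N}$, pairwise distinct $x_{1},\dots,x_{N}$ and $0\neq\alpha\in\mathbb{C}^{N}$. Inserting the pointwise inversion formula $P\phi_B(x_{j}-x_{k})=\frac{1}{2\pi}\int_{\mathbb{R}}\widehat{P\phi_B}(\xi)e^{\textbf{i}(x_{j}-x_{k})\xi}\,d\xi$ into the quadratic form \eqref{zhengding} and collecting the finite double sum under the integral gives
\begin{align}
\sum_{j=1}^{N}\sum_{k=1}^{N}\alpha_{j}\overline{\alpha}_{k}\,P\phi_B(x_{j}-x_{k})
=\frac{1}{2\pi}\int_{\mathbb{R}}\widehat{P\phi_B}(\xi)\Big|\sum_{j=1}^{N}\alpha_{j}e^{\textbf{i}x_{j}\xi}\Big|^{2}\,d\xi.
\end{align}
The integrand is nonnegative, so the form is $\ge 0$; to upgrade this to strict positivity I would argue that the exponential sum $\sum_{j}\alpha_{j}e^{\textbf{i}x_{j}\xi}$, having distinct frequencies $x_{j}$ and not all $\alpha_{j}=0$, is a nonzero entire function of $\xi$ and therefore vanishes only on a discrete set. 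Consequently the product $\widehat{P\phi_B}(\xi)\,|\sum_{j}\alpha_{j}e^{\textbf{i}x_{j}\xi}|^{2}$ is strictly positive off the union of two measure-zero sets, so the integral is strictly positive. This establishes \eqref{zhengding}, and combined with the evenness above the proposition follows.

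I expect the genuinely delicate points to be bookkeeping rather than depth: (i) verifying $\widehat{P\phi_B}\in L^{1}(\mathbb{R})$ uniformly over all directions $P$, including the degenerate axis-parallel case in which only one sinc factor decays; and (ii) justifying that the nonzero exponential polynomial cannot vanish on a set of positive measure, which I would get from its analyticity. Once these are secured, the Bochner-type identity does the rest.
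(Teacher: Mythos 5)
Your proof is correct, and its core mechanism is the one the paper uses: the paper also computes $\widehat{P\phi_{B}}(\xi)=\prod_{k=1}^{2}\bigl[\tfrac{\sin (P_{k}\xi/2)}{P_{k}\xi/2}\bigr]^{2n_{k}}$ (display \eqref{fgfg}) and also reduces positive definiteness to the Bochner-type identity of Lemma \ref{uyt7} (quoted from \cite{Wendlan}). Where you genuinely diverge is in how strict positivity of the quadratic form is extracted. The paper, through its Lemma \ref{uyt8}, only uses that $\widehat{P\phi_{B}}>0$ on a fixed interval around the origin ($|\xi|<\min\{\pi/|P_{k}|:P_{k}\neq 0\}$), restricts the Bochner integral to that interval, and invokes \cite[Lemma 6.7]{Wendlan}: exponentials with pairwise distinct frequencies are linearly independent in $L^{2}$ of an interval, hence a Riesz basis of their span, which produces the quantitative coercivity bound $\Lambda\sum_{j}|c_{j}|^{2}$ of \eqref{szhen}. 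You instead argue qualitatively: $\widehat{P\phi_{B}}>0$ off a discrete set, and a nonzero exponential polynomial extends to a nonzero entire function, so its zero set is discrete; hence the integrand is positive a.e.\ and the integral is strictly positive. Both arguments are valid. Yours is more elementary (no Riesz-basis input) and works under the weaker hypothesis that the Fourier transform is merely nonnegative and positive on a set of positive measure, whereas the paper's version buys an explicit lower bound on the smallest eigenvalue of the collocation matrix $A_{x_{1},\ldots,x_{n}}$ --- exactly the quantity that controls the conditioning of the linear systems \eqref{eee3} and \eqref{8543fr} solved in the reconstruction and the numerical section, which is presumably why the authors prove the stronger statement. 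Your treatment of evenness (real, even, $L^{1}$ Fourier transform forces $P\phi_{B}$ real and even) is fine and in fact more explicit than the paper's. The one step you should make airtight is the assertion that the exponential sum is not identically zero: differentiate repeatedly at $\xi=0$ and use a Vandermonde argument on the distinct $x_{j}$, after which analyticity gives the discrete zero set as you claim.
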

\begin{proof}
The proof is given in section \ref{kjcd}.
\end{proof}

In what follows, we establish the main result in this section.

\begin{theo}\label{optimal}
Suppose that  $\phi_{B}\in L^{2}(\mathbb{R}^{2})$ is defined  in \eqref{eeee17845},  and  denote  $\hbox{supp}(\phi_{B})= [N_{1}, M_{1}]\times
[N_{2}, M_{2}]$.
Let  $E=[a_{1}, b_{1}]\times [a_{2}, b_{2}]$ and
\begin{align}\label{ee12} \mathring{E}=\Big\{\Big[\lceil a_{1}-M_{1}\rceil, \lfloor b_{1}-N_{1}\rfloor\Big]\times
\Big[\lceil a_{2}-M_{2}\rceil, \lfloor b_{2}-N_{2}\rfloor\Big]\Big\}\cap \mathbb{Z}^{2}.\end{align}
 Then for almost all    projection vector   $P\in \mathcal{B}:=\{\textbf{x}\in \mathbb{R}^{1\times 2}: \|\textbf{x}\|_{2}=1\}$,
any $f\in V(\phi_{B}, \mathbb{Z}^{2})$ such that $\hbox{supp}(f)\subseteq E$
  can be determined by its  SA-Radon   samples  $\{Pf(Pk): k\in \mathring{E}\}$.
%
%
\end{theo}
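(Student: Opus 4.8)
The plan is to reduce the statement to two ingredients already in hand: the positive-definiteness of $P\phi_{B}$ (Proposition \ref{ertyfc}) together with the sampling principle for PD functions (Proposition \ref{6758}), and a measure-zero argument that secures injectivity of the map $k \mapsto Pk$ on $\mathring{E}$ for almost every direction $P$. The genuinely hard analytic fact, namely that the Radon transform of the box-spline is PD, is isolated in Proposition \ref{ertyfc}, so within this proof that may be used as a black box.

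First I would fix $f \in V(\phi_{B}, \mathbb{Z}^{2})$ with $\hbox{supp}(f) \subseteq E$. Exactly as in \eqref{XK}--\eqref{YK} of Lemma \ref{main1}, the support condition forces $f = \sum_{k \in \mathring{E}} c_{k}\,\phi_{B}(\cdot - k)$ for a finite coefficient array $\{c_{k}\}_{k \in \mathring{E}}$, and applying the Radon transform gives
\begin{align*}
Pf = \sum_{k \in \mathring{E}} c_{k}\, P\phi_{B}(\cdot - Pk).
\end{align*}
Thus determining $f$ is equivalent to recovering the coefficients $\{c_{k}\}$, and the available data are the point samples $\{Pf(Pk): k \in \mathring{E}\}$. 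The goal is therefore to show that the linear system relating the samples to the coefficients is invertible for almost every $P$.

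Next I would handle the ``almost all $P$'' clause. Set $\mathring{E}^{-} := \{k - \hat{k}: k, \hat{k} \in \mathring{E}\}$, a finite subset of $\mathbb{Z}^{2}$. The map $k \mapsto Pk$ is injective on $\mathring{E}$ precisely when $Pv \neq 0$ for every $\textbf{0} \neq v \in \mathring{E}^{-}$. For each such fixed $v$, the set $\{P \in \mathcal{B}: Pv = 0\}$ consists of the two unit vectors orthogonal to $v$, hence has arc-measure zero on the circle $\mathcal{B}$. Since $\mathring{E}^{-}$ is finite, the exceptional set $\bigcup_{\textbf{0} \neq v \in \mathring{E}^{-}} \{P \in \mathcal{B}: Pv = 0\}$ is a finite union of measure-zero sets, hence negligible. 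Consequently, for almost every $P \in \mathcal{B}$ the points $\{Pk: k \in \mathring{E}\}$ are pairwise distinct.

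Finally, for any such $P$ I would invoke Proposition \ref{ertyfc}, by which $P\phi_{B}$ is PD. Applying Proposition \ref{6758} with the pairwise-distinct node set $X = \{Pk: k \in \mathring{E}\}$, the function $Pf = \sum_{k} c_{k}\, P\phi_{B}(\cdot - Pk)$ is determined by its values on $X$; equivalently, the interpolation matrix $\big[P\phi_{B}(Pk - P\hat{k})\big]_{k, \hat{k} \in \mathring{E}}$ is strictly positive definite, hence invertible, so the samples $\{Pf(Pk): k \in \mathring{E}\}$ determine $\{c_{k}\}$ and therefore $f$. The heavy lifting is done upstream by the PD property of $P\phi_{B}$ (Proposition \ref{ertyfc}, proved in Section \ref{kjcd}); within the present argument the only real obstacle is ruling out a cancellation that would collapse two distinct lattice points of $\mathring{E}$ under the projection $P$, which is exactly the measure-zero estimate above. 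I note that, in contrast to Theorem \ref{ee2}, no integrality or lattice condition such as $(A_{2})$ is needed here, since PD-ness holds at every angle and thus the injectivity of $P_{\mathring{E}}$ is the sole constraint.
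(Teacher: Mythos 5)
Your proposal is correct and follows essentially the same route as the paper's own proof: expand $f$ over $\mathring{E}$ using the support condition, push the expansion through the Radon transform to get $Pf=\sum_{k\in\mathring{E}}c_{k}\,P\phi_{B}(\cdot-Pk)$, establish injectivity of $k\mapsto Pk$ on $\mathring{E}$ for almost every $P$ by a finite union of measure-zero sets, and conclude via Proposition \ref{ertyfc} (PD-ness of $P\phi_{B}$) combined with Proposition \ref{6758}. The only cosmetic difference is that you run the measure-zero argument directly with arc-length measure on $\mathcal{B}$ (noting each exceptional set is just two antipodal points), whereas the paper argues with Lebesgue measure on $\mathbb{R}^{1\times 2}$ and then transfers the conclusion to $\mathcal{B}$; your formulation is in fact the tidier one.
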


\begin{proof}
The proof is given in section \ref{optimalll}.
\end{proof}

\begin{rem}
Unlike Theorem \ref{ee2}, the projection vector $P$ in Theorem \ref{optimal} does not necessarily
satisfy the conditions $A_{1}$ and $A_{2}$ in section \ref{conditionnn}.
\end{rem}

\subsection{The proof of Proposition \ref{ertyfc}}\label{kjcd}
The following lemma is derived from \cite[section 6.2]{Wendlan}.
\begin{lem}\label{uyt7}
Suppose that $\varphi\in C(\mathbb{R})\cap L^{1}(\mathbb{R})$ such that $\widehat{\varphi}\in L^{1}(\mathbb{R})$.
Then for any set $\{x_{1}, \ldots, x_{n}\}$ of   pairwise distinct points on $\mathbb{R}$, it holds that
\begin{align}\label{eeee789} \sum^{n}_{k=1}\sum^{n}_{l=1}c_{k}\overline{c}_{l}\varphi(x_{k}-x_{l})=\int_{\mathbb{R}}\widehat{\varphi}(\xi)|\sum^{n}_{k=1}
c_{k}e^{-\textbf{i}x_{k}\xi}|^{2}d\xi,
\end{align}
where $\{c_{1}, \ldots, c_{n}\}\subseteq \mathbb{C}.$
\end{lem}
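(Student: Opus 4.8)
The plan is to reduce the identity \eqref{eeee789} to a single application of the Fourier inversion formula. The hypotheses $\varphi\in C(\mathbb{R})\cap L^{1}(\mathbb{R})$ and $\widehat{\varphi}\in L^{1}(\mathbb{R})$ are precisely what is needed so that inversion holds \emph{at every point}: the function $t\mapsto \frac{1}{2\pi}\int_{\mathbb{R}}\widehat{\varphi}(\xi)e^{\textbf{i}t\xi}\,d\xi$ is continuous (being dominated by the integrable $|\widehat{\varphi}|$) and coincides almost everywhere with $\varphi$, hence coincides with the continuous $\varphi$ everywhere. First I would apply this at each difference $t=x_{k}-x_{l}$, writing $\varphi(x_{k}-x_{l})=\frac{1}{2\pi}\int_{\mathbb{R}}\widehat{\varphi}(\xi)e^{\textbf{i}(x_{k}-x_{l})\xi}\,d\xi$. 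Note that the pairwise distinctness of the $x_{k}$ plays no role in the identity itself; it is only relevant to the strict positivity conclusion drawn from it later.

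Next I would substitute this into the left-hand double sum. Since the sum over $k,l$ is finite, the interchange of summation and integration is immediate and needs no justification beyond $\widehat{\varphi}\in L^{1}(\mathbb{R})$. This gives
\[
\sum_{k,l}c_{k}\overline{c}_{l}\varphi(x_{k}-x_{l})=\frac{1}{2\pi}\int_{\mathbb{R}}\widehat{\varphi}(\xi)\Big(\sum_{k,l}c_{k}\overline{c}_{l}e^{\textbf{i}(x_{k}-x_{l})\xi}\Big)d\xi.
\]
The inner double sum then factors as $\big(\sum_{k}c_{k}e^{\textbf{i}x_{k}\xi}\big)\overline{\big(\sum_{l}c_{l}e^{\textbf{i}x_{l}\xi}\big)}=\big|\sum_{k}c_{k}e^{\textbf{i}x_{k}\xi}\big|^{2}$, which already delivers the stated identity up to the sign in the exponent and the normalizing constant.

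I expect the only care required to be bookkeeping about conventions rather than any analytic obstacle. Regarding the constant: with the paper's convention $\widehat{g}(\gamma)=\int g(x)e^{-\textbf{i}x\gamma}\,dx$, inversion carries the factor $\frac{1}{2\pi}$, so strictly the right-hand side of \eqref{eeee789} should be preceded by $\frac{1}{2\pi}$; this factor is harmless for the positive-definiteness application in Proposition \ref{ertyfc}, where only the sign of the quadratic form is used. Regarding the exponent, Fourier inversion naturally produces $\big|\sum_{k}c_{k}e^{\textbf{i}x_{k}\xi}\big|^{2}$, whereas \eqref{eeee789} is written with $e^{-\textbf{i}x_{k}\xi}$; the two integrals agree after the change of variable $\xi\mapsto-\xi$ once one uses that $\widehat{\varphi}$ is even, which is guaranteed in the intended setting because the generator $\varphi=P\phi_{B}$ is real and even. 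Reconciling these two conventions is the only subtlety; the substantive content is the single use of the Fourier inversion formula.
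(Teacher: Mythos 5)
Your proposal is correct and is essentially the paper's own route: the paper offers no argument for Lemma \ref{uyt7} beyond citing \cite[Section 6.2]{Wendlan}, and the standard proof there is exactly your two steps (pointwise Fourier inversion, valid since $\varphi\in C(\mathbb{R})\cap L^{1}(\mathbb{R})$ and $\widehat{\varphi}\in L^{1}(\mathbb{R})$, followed by factoring the finite double sum into a squared modulus). Your bookkeeping remarks are also accurate and worth recording as errata to the statement: with the paper's convention $\widehat{g}(\gamma)=\int g(x)e^{-\textbf{i}x\gamma}dx$, the right-hand side of \eqref{eeee789} should carry the factor $\frac{1}{2\pi}$, and matching the printed exponent $e^{-\textbf{i}x_{k}\xi}$ requires the substitution $\xi\mapsto-\xi$ together with evenness of $\widehat{\varphi}$ (equivalently, $\varphi$ real and even); neither defect matters downstream, since in Lemma \ref{uyt8} and Proposition \ref{ertyfc} the function $\varphi=P\phi_{B}$ is real and even and only nonnegativity and strict positivity of the quadratic form are used.
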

Based on Lemma \ref{uyt7} we establish the following lemma.
\begin{lem}\label{uyt8}
Suppose that $\varphi\in C(\mathbb{R})$ is even and  real-valued   such that $\widehat{\varphi}\in L^{1}(\mathbb{R})$.
If   $\widehat{\varphi}(\xi)\geq0$ for any $\xi\in \mathbb{R}$ and there exists $\zeta>0$
such that $\widehat{\varphi}(\xi)>0$  on $[-\zeta, \zeta]$,
then   the matrix
\begin{align}\label{qixiang}
A_{x_{1}, \ldots, x_{n}}=\left[\begin{array}{cccccccccc}
\varphi(0)&\varphi(x_{1}-x_{2})&\cdots&\varphi(x_{1}-x_{n})\\
\varphi(x_{2}-x_{1})&\varphi(0)&\cdots&\varphi(x_{2}-x_{n})\\
\vdots&\vdots&\ddots&\vdots\\
\varphi(x_{n}-x_{1})&\varphi(x_{n}-x_{2})&\cdots&\varphi(0)
\end{array}\right]
\end{align}
is positive definite for any set  $X=\{x_{1}, x_{2}, \ldots, x_{n}\}\subseteq \mathbb{R}$ of pairwise distinct points.
More precisely, there exists a positive constant $\Lambda$ (being dependent on $\varphi, X$
and $\zeta$) such that for any vector $\textbf{0}\neq(c_{1}, \ldots, c_{n})\in \mathbb{C}^{n}$,  it holds
\begin{align} \label{szhen}\begin{array}{lllllllll}\displaystyle (c_{1}, \ldots, c_{n})A_{x_{1}, \ldots, x_{n}}(c_{1}, \ldots, c_{n})^{\star}=\sum^{n}_{k=1}\sum^{n}_{l=1}c_{k}\overline{c}_{l}\varphi(x_{k}-x_{l})\geq\Lambda\sum^{n}_{j=1}|c_{j}|^{2},\end{array}\end{align}
where $(c_{1}, \ldots, c_{n})^{\star}$ is the transpose and  complex conjugate of $(c_{1}, \ldots, c_{n}).$
\end{lem}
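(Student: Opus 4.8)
The plan is to reduce the claimed quantitative lower bound to the spectral theory of a Gram matrix, using the integral representation supplied by Lemma \ref{uyt7}. First I would rewrite the quadratic form as $\sum_{k,l}c_{k}\overline{c}_{l}\varphi(x_{k}-x_{l})=\int_{\mathbb{R}}\widehat{\varphi}(\xi)\bigl|\sum_{k}c_{k}e^{-\textbf{i}x_{k}\xi}\bigr|^{2}\,d\xi$. Although Lemma \ref{uyt7} is stated for $\varphi\in C(\mathbb{R})\cap L^{1}(\mathbb{R})$, this identity continues to hold under the present hypotheses, since $\widehat{\varphi}\in L^{1}(\mathbb{R})$ and $\varphi$ is continuous, so Fourier inversion holds pointwise and the computation in the proof of Lemma \ref{uyt7} goes through verbatim. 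Because the hypothesis gives $\widehat{\varphi}(\xi)\geq0$ everywhere, the integrand is nonnegative, so I may discard the contribution outside $[-\zeta,\zeta]$ and retain the lower bound $\int_{-\zeta}^{\zeta}\widehat{\varphi}(\xi)\bigl|\sum_{k}c_{k}e^{-\textbf{i}x_{k}\xi}\bigr|^{2}\,d\xi$.

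Next, since $\widehat{\varphi}$ is continuous and strictly positive on the compact interval $[-\zeta,\zeta]$, it attains a positive minimum $m:=\min_{|\xi|\leq\zeta}\widehat{\varphi}(\xi)>0$; pulling this out yields the bound $m\int_{-\zeta}^{\zeta}\bigl|\sum_{k}c_{k}e^{-\textbf{i}x_{k}\xi}\bigr|^{2}\,d\xi$. The remaining integral is precisely the quadratic form $(c_{1},\ldots,c_{n})\,G\,(c_{1},\ldots,c_{n})^{\star}$ of the Gram matrix $G=(G_{kl})$ with $G_{kl}=\int_{-\zeta}^{\zeta}e^{-\textbf{i}(x_{k}-x_{l})\xi}\,d\xi$ associated to the functions $\{\xi\mapsto e^{-\textbf{i}x_{k}\xi}\}_{k=1}^{n}$ regarded as elements of $L^{2}([-\zeta,\zeta])$. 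Thus the whole argument reduces to showing that this Gram matrix is positive definite and extracting its least eigenvalue.

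The crux is the linear independence of the exponentials $\xi\mapsto e^{-\textbf{i}x_{k}\xi}$, $k=1,\ldots,n$, on $[-\zeta,\zeta]$ when the $x_{k}$ are pairwise distinct. I would establish this by noting that a finite exponential sum $\sum_{k}c_{k}e^{-\textbf{i}x_{k}\xi}$ with distinct frequencies is, for $(c_{k})\neq\textbf{0}$, a nonzero real-analytic function, so its zero set is discrete and it cannot vanish on a set of positive measure; equivalently one invokes the nonsingularity of the associated generalized Vandermonde system. Linear independence makes $G$ the Gram matrix of linearly independent vectors, hence strictly positive definite, so its smallest eigenvalue $\Lambda_{0}>0$ satisfies $(c_{1},\ldots,c_{n})\,G\,(c_{1},\ldots,c_{n})^{\star}\geq\Lambda_{0}\sum_{j}|c_{j}|^{2}$ for every $(c_{k})\in\mathbb{C}^{n}$. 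Setting $\Lambda:=m\Lambda_{0}>0$ then yields \eqref{szhen}, and in particular $A_{x_{1},\ldots,x_{n}}$ is positive definite.

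The main obstacle is this last positivity/uniformity step. Positive-definiteness in the sense of Definition \ref{definitionofPD} only gives a strict inequality for each fixed nonzero coefficient vector, whereas \eqref{szhen} demands a single constant $\Lambda$ valid simultaneously for all $(c_{k})$. The compactness of the unit sphere in $\mathbb{C}^{n}$, together with the continuity and strict positivity off the origin of the Gram quadratic form, is what upgrades pointwise positivity to the uniform lower bound; equivalently, $\Lambda_{0}$ is simply the least eigenvalue of the positive-definite matrix $G$. Note that the resulting $\Lambda$ depends on $\varphi$ through $m$, on $X$ and $n$ through $G$, and on $\zeta$, exactly as asserted in the statement.
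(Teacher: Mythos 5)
Your proof is correct and follows essentially the same route as the paper's: both pass through the integral representation of Lemma \ref{uyt7}, use $\widehat{\varphi}\geq 0$ to discard the integral outside an interval of strict positivity, pull out the positive minimum of $\widehat{\varphi}$ there, and reduce the uniform bound \eqref{szhen} to the strict positive definiteness of the Gram matrix of the exponentials $\{e^{-\textbf{i}x_{k}\xi}\}_{k=1}^{n}$ in $L^{2}$ of that interval. The only cosmetic differences are that the paper cites \cite[Lemma 6.7]{Wendlan} for the linear independence of these exponentials (phrased as a Riesz-basis lower bound $\lambda_{X,\tilde{\zeta}}$) where you argue directly via real-analyticity of exponential sums, and the paper optimizes its constant over subintervals $[-\tilde{\zeta}/2,\tilde{\zeta}/2]$ with $0<\tilde{\zeta}\leq\zeta$ rather than working on the fixed interval $[-\zeta,\zeta]$.
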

\begin{proof}
The proof is given in section \ref{proofofuyt8}.
\end{proof}

\textbf{Proof of Proposition \ref{ertyfc}}:
Denote $P=[P_{1},   P_{2}]$. By \cite[Theorem 4.3]{Chui}, it is easy to prove that $\varphi_{2n}$ mentioned in \eqref{eeee17845}
is even, real-valued and compactly supported. Moreover,
by \eqref{Radonform}, for any $\xi\in \mathbb{R}$ we have
\begin{align} \label{fgfg} \widehat{P\phi_{B}}(\xi)=\widehat{\phi_{B}}(P^{T}\xi)=
\prod^{2}_{k=1}\Big[\frac{\sin P_{k}\xi/2}{P_{k}\xi/2}\Big]^{2n_{k}}.\end{align}
Clearly, $\widehat{P\phi_{B}}(\xi)\geq0$.  Furthermore,   $\widehat{P\phi_{B}}(\xi)>0$ whenever $|\xi|<\min\{\frac{\pi}{|p_{k}|}: k\in \hbox{supp}(P)\}$. Here  $k\in \hbox{supp}(P)$
means that $P_{k}\neq0.$ Now  by Lemma \ref{uyt8}, $P\phi_{B}$ is PD.

\subsection{Proof of Theorem  \ref{optimal}}\label{optimalll}
It follows from Proposition \ref{ertyfc} that for any $\textbf{0}\neq P\in \mathbb{R}^{1\times D}$, the projection $P\phi_{B}$
is PD. As in \eqref{XK} we have
 \begin{align}\label{XK123} f(x_{1},  x_{2})=\sum^{\lfloor b_{1}-N_{1}\rfloor}_{k_{1}=\lceil a_{1}-M_{1}\rceil} \sum^{\lfloor b_{2}-N_{2}\rfloor}_{k_{2}=\lceil a_{2}-M_{2}\rceil}c_{k_{1},  k_{2}}\phi_{B}(x_{1}-k_{1},  x_{2}-k_{2}).\end{align}
As in the proof of Theorem \ref{ee2}, define the map $P_{\mathring{E}}: \mathring{E}\longrightarrow \mathbb{R}$ by $ P_{\mathring{E}}(x)=Px, x\in \mathring{E}$.
By  \eqref{YK}  we can  denote   $Pf=\sum^{\#\Lambda}_{k=1}\widehat{c}_{j_{k}}[P\phi_{B}](\cdot-j_{k})$ where $\Lambda=\mathcal{R}(P_{\mathring{E}})=\{j_{1}, \ldots, j_{\#\Lambda}\}.$
Therefore, \begin{align}\label{eee3} \Big[Pf(j_{1}), \ldots,  Pf(j_{\#\Lambda})\Big]^{T}=A_{j_{1}, \ldots, j_{\#\Lambda}}[\widehat{c}_{j_{1}}, \ldots,
\widehat{c}_{j_{\#\Lambda}}]^{T},\end{align}
where $A_{j_{1}, \ldots, j_{\#\Lambda}}$ is defined by  \eqref{qixiang} with $\varphi$ and
$\{x_{1}, x_{2}, \ldots, x_{N}\}$ therein  replaced by $P\phi_{B}$ and $\{j_{1}, \ldots, j_{\#\Lambda}\}$, respectively.
In what follows, we prove that for a almost all   projection vector $P\in\mathcal{B}$, the map $P_{\mathring{E}}: \mathring{E} \longrightarrow\mathcal{R}(\mathring{E}) $   is a  bijection. For convenient narration denote $\mathring{E}=\{\textbf{e}_{1}, \ldots, \textbf{e}_{\#\mathring{E}}\}$.
For  any $k, j\in \{1, \ldots, \#\mathring{E}\}$, it is easy to check that  the Lebesgue measure   $\mathbf{m}\{\tilde{P}\in \mathbb{R}^{1\times 2}: \tilde{P}(\textbf{e}_{k}-\textbf{e}_{j})=0\}=0.$ Then $\mathbf{m}\Big(\bigcup^{\#\mathring{E}}_{k=1}\{\tilde{P}\in \mathbb{R}^{1\times 2}: \tilde{P}(\textbf{e}_{k}-\textbf{e}_{j})=0\}\Big)=0,$ by which it is easy to prove that
\begin{align} \label{097gr}\mathbf{m}\Big\{\tilde{P}\in\mathbb{R}^{1\times 2}:  \ \hbox{such that} \ \tilde{P}(\textbf{e}_{k}-\textbf{e}_{j})=0 \ \hbox{for} \
\hbox{some} \ k\in \{1, \ldots, \#\mathring{E}\}\Big\}=0.\end{align}
Clearly, \eqref{097gr} holds with $\tilde{P}$ replaced by $P$.
Stated another way, $P_{\mathring{E}} $   is    injective.
By Proposition \ref{ertyfc},
$P\phi$ is   PD.
Then it follows from Proposition \ref{6758} that $f$ can be determined by the SA-Radon  samples  $\{Pf(Pk): k\in \mathring{E}\}$.
The proof is concluded. $\Box$

\section{Numerical Simulation}
Let $ \phi_{B}(x_{1},  x_{2})=\varphi_{2}(x_{1})\varphi_{2}(x_{2})$ be defined by  \eqref{eeee17845}.
For   a projection vector  $P=[\cos\theta, \sin\theta]$ such that
$\tan\theta\geq2$, through direct  calculation it follows from  \eqref{en123456}  that the Radon transform
\begin{align}
 P\phi_{B}(x, \theta)=
 \left\{
 \begin{array}{lllllllll}
 \frac{(\tan\theta-\frac{x}{\cos\theta})\big[(\frac{x}{\cos\theta}-\tan\theta-\frac{3}{2})^{2}+\frac{3}{4}\big]+1}{6\cos\theta \tan^{2}\theta},\ \ & x\in(\sin\theta, \sin\theta+\cos\theta)\\
 \\
 \frac{3\big(\frac{x}{\cos\theta}-\tan\theta\big)^{2}+\big(\frac{x}{\cos\theta}-\tan\theta\big)^{3}+3\tan\theta-\frac{3x}{\cos\theta}+1}{6\cos\theta \tan^{2}\theta}, \ \ &x\in(\sin\theta-\cos\theta, \sin\theta]\\
 \\
 \frac{\tan\theta-\frac{x}{\cos\theta}}{\cos\theta \tan^{2}\theta}, \ \ &x\in[\cos\theta, \sin\theta-\cos\theta]\\
 \\
 \frac{\frac{x}{\cos\theta}\big(\frac{2x^{2}}{\cos^{2}\theta}-\frac{6x}{\cos\theta}+3\big)+6\tan\theta-\frac{3x}{\cos\theta}-2}{6\cos\theta \tan^{2}\theta}, \ \ &x\in(0, \cos\theta)\\
 \\
 \frac{6\tan\theta+\frac{6x}{\cos\theta}-2\big(\frac{x}{\cos\theta}+1\big)^{3}}{6\cos\theta \tan^{2}\theta},\ \ &x\in(-\cos\theta, 0]\\
 \\
 \frac{\tan\theta+\frac{x}{\cos\theta}}{\cos\theta \tan^{2}\theta}, \ \ &x\in[-\sin\theta+\cos\theta, -\cos\theta]\\
 \\
  \frac{3\tan\theta+\frac{3x}{\cos\theta}+1+\big(3-\tan\theta-x\big)\big( \tan\theta+\frac{x}{\cos\theta}\big)^{2}}{6\cos\theta \tan^{2}\theta},\ \ &x\in[-\sin\theta, -\sin\theta+\cos\theta)\\
  \\
  \frac{\big(\tan\theta+\frac{x}{\cos\theta}+1\big)^{3}}{6\cos\theta \tan^{2}\theta}, \ \ &x\in(-\sin\theta-\cos\theta, -\sin\theta)\\
 0,\ \ &\hbox{else}.
 \end{array}\right.
\end{align}

\begin{figure*}[htbp]\label{kncc}
 \scriptsize
 \centering
    \begin{minipage}[b]{.1\linewidth}
    \centerline{\includegraphics[width=14cm, height=5cm]{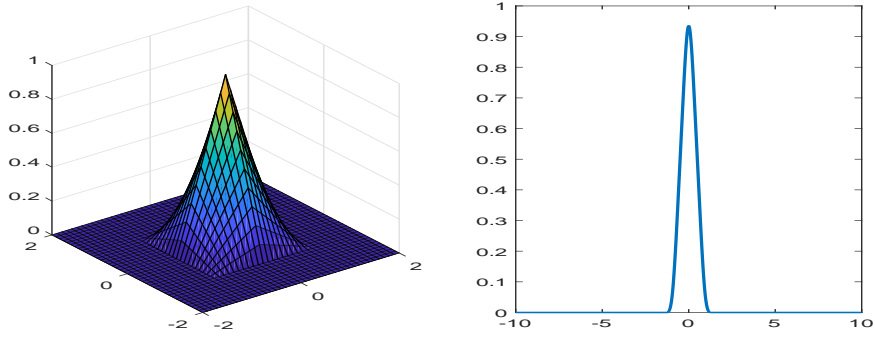}}
  \end{minipage}\hfill
 \caption{Left: the plot of $\phi_{B}$. Right: the plot of Radon transform  $  P\phi_{B}(x)$ with $P=[\cos(1.2208), \sin(1.2208)]$.}
 \label{fig:balvsunbalHam156}
\end{figure*}

\begin{figure*}[htbp]\label{kncc}
 \scriptsize
 \centering
    \begin{minipage}[b]{.1\linewidth}
    \centerline{\includegraphics[width=14cm, height=10cm]{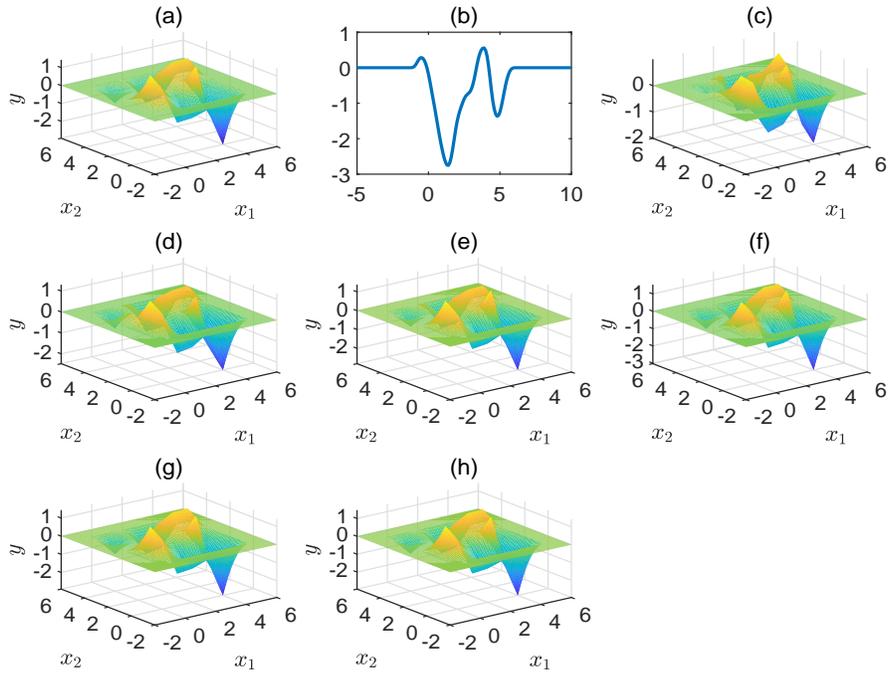}}
  \end{minipage}\hfill
 \caption{(a) The figure of $f(x_{1}, x_{2})$; (b) The figure of Radon transform $ Pf$; (c) The reconstruction version $\widetilde{f}(x_{1}, x_{2})$ when SNR=30;
 (d) The reconstruction version $\widetilde{f}(x_{1}, x_{2})$ when SNR=35; (e) The reconstruction version $\widetilde{f}(x_{1}, x_{2})$ when SNR=40;
 (f) The reconstruction version $\widetilde{f}(x_{1}, x_{2})$ when SNR=45; (g) The reconstruction version $\widetilde{f}(x_{1}, x_{2})$ when SNR=50;
 (h) The reconstruction version $\widetilde{f}(x_{1}, x_{2})$ when SNR=55.}
 \label{fig:balvsunbalHam156}
\end{figure*}

Without bias, we choose a  function
 \begin{align}\label{indexfunction} f=\sum^{4}_{k_{1}=0}\sum^{4}_{k_{2}=0}C_{k_{1}, k_{2}}\phi_{B}(x_{1}-k_{1}, x_{2}-k_{2})\in V(\phi_{B})\end{align} being supported on $[0, 6]^{2}$.
Randomly choosing an angle $\theta$ from $(\arctan 2, \pi/2)$, it follows from Theorem \ref{optimal} that with probability $1$, $f$ can be determined by its SP Radon samples $\{Pf(Pk): k\in \mathring{E}\}$
with
$\mathring{E}=[0, 4]^{2}\cap \mathbb{Z}^{2}.$  For simplicity,  by the lexicographic order  $\mathring{E}_{N}$ is denoted  as   $\{e_{1}, \ldots, e_{25}\}$. Then $f$ is determined  through the
linear equation system
\begin{align}\label{8543fr} A\big(C_{e_{1}}, \ldots, C_{e_{25}}\big)^{T}=
\big(Pf(Pe_{1}), \ldots, Pf(Pe_{25})\big)^{T},\end{align}
where $A=(P\phi_{B}(Pe_{j}-Pe_{l})\Big)^{25}_{j, l=1}.$
In this simulation we choose $\theta=1.2208$ and $P=[\cos(1.2208), \sin(1.2208)]$. The graphs of $\phi_{B}$ and $  P\phi_{B}(x)$ are  illustrated  in Figure \ref{fig:balvsunbalHam156}.
We add the  Gaussian noise $\epsilon\thicksim N(0, \sigma^{2})$ to the samples on the right-hand side of \eqref{8543fr}, and
what we observed is $Pf(Pe_{l})=Pf(Pe_{l})+\epsilon$.
Here the variance $\sigma^{2}$   is chosen such that the desired signal to noise ratio  (SNR)
is expressed by
\begin{align}\label{SNRDE} \begin{array}{lllllllllllllllll} \hbox{SNR}=10\log_{10}\Big(\frac{\|\{Pf(Pe_{l}): \
 l=1, \ldots, 25\}\|^{2}_{2}}{25\sigma^{2}}\Big).\end{array}\end{align}
We consider the
Tikhonov regularization problem
\begin{align}\label{uygr}
\arg\min_{C}\|AC-\big(\widetilde{Pf(Pe_{1})}, \ldots, \widetilde{Pf(Pe_{25})}\big)^{T}\|_{2}
+\alpha\|C\|_{2},
\end{align}
where the regularization parameter $\alpha$ satisfies the requirement in \cite[Theorem 2.5]{Engl}.
In this simulation, $\alpha$ is chosen as in Table 1.
The solution to \eqref{uygr} is \begin{align}\label{jie} \widehat{C}=(\alpha I+A^{*}A)^{-1}A^{*}\big(\widetilde{Pf(Pe_{1})}, \ldots, \widetilde{Pf(Pe_{25})}\big)^{T},\end{align}
where $I$ is the identity matrix.
For every $\hbox{SNR}\in \{30, 35, 40, 45, 50, 55\}$, conduct \eqref{jie} for  $1000$ trials
and the average of the $1000$ outputs   is denoted by $\bar{\widehat{C}}$. Construct $\widetilde{f}$ through
\eqref{indexfunction} with $(C_{e_{1}}, \ldots, C_{e_{25}}\big)^{T}$ being replaced by
$\bar{\widehat{C}}$.
And  the  reconstruction error   is defined as
\begin{align}
\hbox{error}=
\frac{\Big[\sum^{80}_{k=0}\sum^{80}_{l=0}\big(\widetilde{f}(-2+0.1k, -2+0.1l)-f(-2+0.1k, -2+0.1l)\big)^{2}\Big]^{1/2}}{\Big[\sum^{80}_{k=0}\sum^{80}_{l=0}\big(f(-2+0.1k, -2+0.1l)\big)^{2}\Big]^{1/2}}.
\end{align}
The average errors corresponding to different choices of SNR are  recorded in Table 1.

\begin{table*}[htbp]\tiny
\label{Table0}
\begin{center}
\begin{tabular}{|c|c|c|c|c|c|c|c}
  \hline
  SNR & 30 & 35 & 40  & 45 & 50 & 55  \\ \hline
  $\alpha$ &$2e-03$&$2.5e-04$&$3.5e-05$&$2e-06$&$1.5e-07$&$9.4e-08$                                              \\ \hline
  \hbox{error} & 0.4255 & 0.2471 & 0.0924  & 0.0480 & 0.0290 & 0.0087 \\ \hline
\end{tabular}
 \caption{ The error  vs  SNR.}
\end{center}
\label{opqerr}
\end{table*}

\section{Conclusion}
We establish the single-angle (SA)   Radon samples based reconstruction for the compactly supported functions
in refinable shift-invariant spaces (SIS).
We  prove that  any compactly supported function in   a  general refinable SIS
can be reconstructed by its  Radon samples at an  appropriate   angle. We also
investigate the SA reconstruction problem in a class of   SISs generated by  a class of refinable
box-splines. It is proved that  any compactly supported function in the SISs generated by such box-splines,
can be reconstructed by the SA Radon samples at almost every angle in $[0, 2\pi)$.

\section{Appendix: Proof of Lemma \ref{uyt8}}\label{proofofuyt8}
Since $\varphi$ is even and real-valued,  so is $\widehat{\varphi}$.
Consequently, $A_{x_{1}, \ldots, x_{n}}$ is symmetric.
Now we choose  $0<\tilde{\zeta}\leq\zeta$ such that $\widehat{\varphi}(\xi)>0$  on $[-\tilde{\zeta}, \tilde{\zeta}]$.
Then for any   $\textbf{0}\neq(c_{1}, \ldots, c_{n})\in \mathbb{C}^{n}$, it follows from Lemma \ref{uyt7}    that
\begin{align}\label{eeee890} \begin{array}{lllllllll} \displaystyle\sum^{n}_{k=1}\sum^{n}_{l=1}c_{k}\overline{c}_{l}\varphi(x_{k}-x_{l})&=(c_{1}, \ldots, c_{n})A_{x_{1}, \ldots, x_{n}}(c_{1}, \ldots, c_{n})^{\star}\\
&\displaystyle=\int_{\mathbb{R}}\widehat{\varphi}(\xi)|\sum^{n}_{k=1}
c_{k}e^{-\textbf{i}x_{k}\xi}|^{2}d\xi\\
&\geq \displaystyle\int^{\tilde{\zeta}/2}_{-\tilde{\zeta}/2}\widehat{\varphi}(\xi)|\sum^{n}_{k=1}
c_{k}e^{-\textbf{i}x_{k}\xi}|^{2}d\xi,
\end{array}
\end{align}
where in the inequality  we use $\widehat{\varphi}(\xi)\geq0$ for any $\xi\in \mathbb{R}$.
Since  $x_{k}\neq x_{j}$,
it follows from \cite[Lemma 6.7]{Wendlan}  that $\{e^{-\textbf{i}x_{k}\xi}\}^{n}_{k=1}$
is linear independent in the space $L^{2}([-\tilde{\zeta}/2, \tilde{\zeta}/2])$.
Consequently,   $\{e^{-\textbf{i}x_{k}\xi}\}^{n}_{k=1}$ is a Riesz basis  for its finitely spanning  space
$\{\sum^{n}_{k=1}c_{k}e^{-\textbf{i}x_{k}\xi}: c_{k}\in \mathbb{C}\}\subseteq L^{2}([-\tilde{\zeta}/2, \tilde{\zeta}/2])$, and there exists $\lambda_{X, \tilde{\zeta}}>0$ (depending on $X$
and $\tilde{\zeta}$)
such that
\begin{align}\label{ouyang} \int^{\tilde{\zeta}/2}_{-\tilde{\zeta}/2}|\sum^{n}_{k=1}
c_{k}e^{-\textbf{i}x_{k}\xi}|^{2}d\xi\geq \lambda_{X, \tilde{\zeta}} \|\{c_{k}\}^{n}_{k=1}\|^{2}_{2}.\end{align}
On the other hand, since $\varphi\in C(\mathbb{R})$ is  compactly supported,
then $\widehat{\varphi}\in C^{\infty}(\mathbb{R})$. Therefore, $\min\{\widehat{\varphi}(\xi): \xi\in [-\zeta, \zeta]\}$
exists and is positive. Consequently, combining   \eqref{eeee890} and \eqref{ouyang} we have
\begin{align}\label{eeee1000} \begin{array}{lllllllll} \displaystyle\sum^{n}_{k=1}\sum^{n}_{l=1}c_{k}\overline{c}_{l}\varphi(x_{k}-x_{l})&=(c_{1}, \ldots, c_{n})A_{x_{1}, \ldots, x_{n}}(c_{1}, \ldots, c_{n})^{\star}\\
&\geq \lambda_{X, \tilde{\zeta}} \|\{c_{k}\}^{n}_{k=1}\|^{2}_{2}>0.
\end{array}
\end{align}
Stated another way, the matrix  $A_{x_{1}, \ldots, x_{n}}$ is positive definite.
Therefore, by \eqref{eeee1000} we have
\begin{align}\begin{array}{lllllllll}
\displaystyle\sum^{n}_{k=1}\sum^{n}_{l=1}c_{k}\overline{c}_{l}\varphi(x_{k}-x_{l})
&\geq\displaystyle\sup_{0<\tilde{\zeta}\leq\zeta}\Big\{\lambda_{X, \tilde{\zeta}}\min\{\widehat{\varphi}(\xi): \xi\in U(0, \tilde{\zeta})\}\Big\}\sum^{n}_{j=1}|c_{j}|^{2}.
\end{array}\end{align}
Now we choose $0<\Lambda:=\sup_{\tilde{\zeta}}\Big\{\lambda_{X, \tilde{\zeta}}\min\{\widehat{\varphi}(\xi): \xi\in U(0, \tilde{\zeta})\}\Big\}<\infty$
to conclude the proof.

\end{document}